\newtheorem{lemma}{\hspace{-0pt}\bf Lemma}
\newtheorem{theorem}{\hspace{-0pt}\bf Theorem}
\newtheorem{remark}{\hspace{-0pt}\bf Remark}
\DeclareMathOperator*{\argmin}{arg\,min}
\long\def\symbolfootnote[#1]#2{\begingroup
\def\thefootnote{\fnsymbol{footnote}}
\footnote[#1]{#2}\endgroup}
\begin{document}
\title{Multi-Timescale Online Optimization of Network Function Virtualization for Service Chaining}

\author{Xiaojing Chen, Wei Ni, Tianyi Chen, Iain B. Collings,~\IEEEmembership{Fellow, IEEE}, Xin Wang,~\IEEEmembership{Senior Member, IEEE}, \\Ren Ping Liu,~\IEEEmembership{Senior Member, IEEE}, and Georgios B. Giannakis,~\IEEEmembership{Fellow, IEEE}
{\thanks{Work in this paper was supported by the National Natural Science Foundation of China grant 61671154, the National Key Research and Development Program of China grant 2017YFB04034002, and the Innovation Program of Shanghai Municipal Science and Technology Commission grant 17510710400; US NSF grants 1509005, 1508993, 1423316, and 1442686.
}
\thanks{X. Chen and X. Wang are with the Shanghai Institute for Advanced Communication and Data Science, Key Laboratory for Information Science of Electromagnetic Waves (MoE), the Dept. of Communication Science and Engineering, Fudan University, 220 Handan Road, Shanghai, China. Emails:~\{13210720095, xwang11\}@fudan.edu.cn.; X. Chen is also with the School of Engineering, Macquarie University, Sydney, NSW 2109, Australia.}
\thanks{W. Ni is with the Commonwealth Scientific and Industrial Research Organization (CSIRO), Sydney, NSW 2122, Australia. Email:~wei.ni@data61.csiro.au.}
\thanks{T. Chen and G. B. Giannakis are with the Dept. of Electrical and Computer Engineering and the Digital Technology Center, University of Minnesota, Minneapolis, MN 55455 USA. Emails:~\{chen3827, georgios\}@umn.edu.}
\thanks{I. B. Collings is with the School of Engineering, Macquarie University, Sydney, NSW 2109, Australia. Email:~iain.collings@mq.edu.au.}
\thanks{R. P. Liu is with the School of Electrical and Data Engineering, University of Technology Sydney, Sydney, NSW 2007, Australia. Email:~RenPing.Liu@uts.edu.au}
\thanks{Part of this paper has been presented in \cite{xiaojingglo17} without detailed proofs and analyses. Apart from providing the details, this paper has significant extensions on the placement of VNFs at different timescales and general application scenarios where there can be multiple VNFs installed per VM.}
	
}
}
\maketitle

\begin{abstract}
Network Function Virtualization (NFV) can cost-efficiently provide network services by running different virtual network functions (VNFs) at different virtual machines (VMs) in a correct order. This can result in strong couplings between the decisions of the VMs on the placement and operations of VNFs. This paper presents a new fully decentralized online approach for optimal placement and operations of VNFs. Building on a new stochastic dual gradient method, our approach decouples the real-time decisions of VMs, asymptotically minimizes the time-average cost of NFV, and stabilizes the backlogs of network services with a cost-backlog tradeoff of $[\epsilon,1/\epsilon]$, for any $\epsilon > 0$. Our approach can be relaxed into multiple timescales to have VNFs (re)placed at a larger timescale and hence alleviate service interruptions. While proved to preserve the asymptotic optimality, the larger timescale can slow down the optimal placement of VNFs. A learn-and-adapt strategy is further designed to speed the placement up with an improved tradeoff $[\epsilon,\log^2(\epsilon)/{\sqrt{\epsilon}}]$. Numerical results show that the proposed method is able to reduce the time-average cost of NFV by 30\% and reduce the queue length (or delay) by 83\%, as compared to existing benchmarks.
\end{abstract}

\begin{IEEEkeywords}
Network Function Virtualization, virtual machine, distributed optimization, stochastic approximation.
\end{IEEEkeywords}
\section{Introduction}

Decoupling dedicated hardware from network services and replacing them with programmable virtual machines (VMs), Network Function Virtualization (NFV) is able to provide critical network functions on top of optimally shared physical infrastructure~\cite{li2015software, han2015network}. This can avoid disproportional hardware investments on short-lived functions, and adapt quickly as network functions evolve~\cite{mijumbi2016network}. Particularly, a virtual network function (VNF) is a virtualized task formerly carried out by proprietary and dedicated hardware, which moves network functions out of dedicated hardware devices and into software~\cite{mijumbi2016network}.

A network service (or service chain) can consist of multiple VNFs, which need to be run in a predefined order at different VMs running different VNF instances (i.e., software)~\cite{eramo2017approach}. Challenges arise from making optimal online decisions on the placement of VNFs, and the processing and routing of network services at each VM, especially in large-scale network platforms. On one hand, given the sequence of VNFs to be executed per network service, the optimal decisions of individual VMs are coupled. On the other hand, stochasticity prevails in the arrivals of network services, and the link capacity between VMs stemming from concurrent traffic~\cite{Riggio2016Scheduling}. Prices can also vary for the service of a VM, depending on the pricing policy of the service providers. The possibility of leveraging temporal resource variability implies the couplings of optimal decisions over time~\cite{mashayekhy2016online,chen2016cost}. Other challenges also include limited scalability resulting from centralized designs~\cite{qu2016delay}. 

These are open problems and have not been captured in previous works on VNF placement. The work in \cite{yousaf2015cost} focused on the placement of VNFs under the assumption of persistent arrivals of network services, where network services were instantly processed at the VMs admitting them and network service chains cannot be supported. The work in \cite{Cohen2015Near} addressed the placement of VNFs in a capacitated cloud network. The placement problem was formulated as a generalization of the Facility Location and Generalized Assignment problem; near-optimal solutions were provided with bi-criteria constant approximations. However, the model in \cite{Cohen2015Near} cannot account for function ordering or flow routing optimization.

Taking network service chains into account, recent works studied optimal decision-makings on processing and routing network services, under the assumption of persistent service arrivals~\cite{Addis2015Virtual}. NP-complete mixed integer linear programming (MILP) was formulated to minimize the delay of network service chains~\cite{qu2016delay}. A heuristic genetic approach was developed to solve MILP by sacrificing optimality~\cite{qu2016delay}. Greedy algorithms were developed to minimize flowtime or cost, or maximize revenue at a snapshot
of the network~\cite{mijumbi2015design}. These heuristic methods need to run in a centralized manner, thereby limiting scalability. Moreover, none of them have taken random service arrivals or dynamic pricing into account.

%

In this paper, we propose a new approach to distributed online optimization of NFV, where asymptotically optimal decisions on the placement and operation of NFV are spontaneously generated at individual VMs. Accounting for random service arrivals and time-varying prices, a stochastic dual gradient method is developed to decouple optimal decision-makings across different VMs and different time slots. It minimizes the time-average cost of NFV while stabilizing the queues of VMs. The gradients can be interpreted as the backlogs of the queues at every VM, and updated locally by the VM. With a proved cost-delay tradeoff $[\epsilon,{1}/{\epsilon}]$, the proposed method is able to asymptotically approach the global optimum which would be generated offline at a prohibitive complexity, by tuning the coefficient $\epsilon$.

Another important contribution is that we extend the distributed online optimization of NFV to two timescales, where the placement of VNFs is carried out at the VMs at a much larger time interval than the operations of the VNFs. This can effectively alleviate the interruptions that the placement/installation of VNFs can cause to network service provisions. We prove that the optimality loss of the two-timescale placement and operation of NFV is upper bounded, and the asymptotic optimality of the proposed distributed online optimization is preserved.
Other contribution is that we further speed up the placement of VNFs and improve the cost-delay tradeoff to $[\epsilon,\log^2(\epsilon)/{\sqrt{\epsilon}}]$ by designing a learn-and-adapt approach, where the statistics of network dynamics is learned from history with increasing accuracy. Corroborated by numerical results, the proposed approach is able to reduce the time-average cost of NFV by 30\% and reduce the queue lengths (or delays) by 83\%, as compared to existing non-stochastic approaches. Accounting for service chaining in stochastic NFV scenarios, the proposed algorithm is important and practical.

In a different yet relevant context, stochastic optimization has been developed for single- or multi-timescale resource allocation \cite{xiaojing2017, wang2016dynamic, wang2016two, wang2016, Yao2013Power,Sun2016Distributed,Alihemmati2017Multi}, routing \cite{Neely2009Optimal, Huang2010The}, and service computing \cite{Huang2012A} in queueing systems to deal with stochastic arrivals of workloads or energy. In particular, backpressure routing algorithms were developed in \cite{Neely2009Optimal, Huang2010The} to maximize throughput or minimize time-average costs in distributed (wireless) mesh networks, while stabilizing the queues across the networks. Yet, the backpressure algorithms were only focused on the routing of workloads, and did not involve any workload processing.
An energy-efficient offloading algorithm was proposed for mobile computing in \cite{Huang2012A}, which can dynamically offload part of an application's computation request to a dedicated server.
None of the existing approaches \cite{xiaojing2017, wang2016dynamic, wang2016two, wang2016, Yao2013Power,Sun2016Distributed,Alihemmati2017Multi, Huang2012A,Neely2009Optimal, Huang2010The} have taken sequentially chained network services into account.
Distinctively different from the existing methods, our approach is able to process and offload/forward sequentially chained network services which strongly couple the optimal decisions of VMs on routing and processing in time and space (i.e., among VMs). In particular, our approach decouples the strong coupling, optimizes both the processing and offloading of chained network services, and substantially reduces the queue lengths (or delays) by taking a learn-and-adapt strategy.

The rest of this paper is organized as follows. In Section~II, the system model is described. In Section~III, the distributed online optimization of the placement of VNFs and the processing/routing of network services is developed. Optimal placement and operation of NFV are investigated at two different timescales in Section~IV to prevent its interruptions to network service provisions. Numerical tests are provided in Section V, followed by concluding remarks in Section VI. Notations in the paper are listed in Table I.

\section{System Model}

\begin{figure}
\centering
\includegraphics[height=0.325\textwidth]{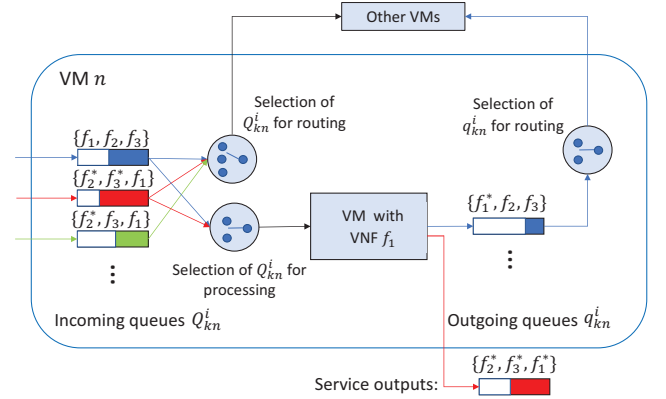}
\caption{An illustration on the processing and routing procedure of network services within VM $n$, where $f^*$ indicates processed VNFs.}\label{model}
\end{figure}

\begin{table}
  \caption{Variables}
  \label{tab:parameters}
  \center
  \begin{tabular}{p{4cm}|p{4cm}}
    \hline
    \emph{Basic Variables} \\
    \hline\hline
    $Q_{kn}^i$ & Queue length of type-$i$ network services to be processed by VNF $f_k$ at VM $n$\\
    \hline
   $q_{kn}^i$ & Queue length of type-$i$ network services after processed at VM $n$ and to be further processed by VNF $f_k$ at downstream VMs\\
    \hline
  $R_{kn}^{i,t}$ & Arrival rate of new type-$i$ network services to be processed by VNF $f_k$ at VM $n$\\
    \hline
    $\alpha_{n}^t$ & Time-varying price for service processing at VM $n$\\
    \hline
   $\beta_{[a,b]}^t$  & Time-varying price for service routing over link $[a,b]$\\
    \hline
     \emph{Decision Variables} \\
    \hline\hline
    $e_{kn}$ & Ability of VM $n$ to process $f_k$\\
    \hline
    $u_{k,ab}^i$ & Transmit rate of $Q_{ka}^{i}$ over link $[a,b]$\\
    \hline
 $v_{k,ab}^i$ & Transmit rate of $q_{ka}^{i}$ over link $[a,b]$\\
    \hline
   $p_{kn}^i$  & Processing rate of VM $n$ for type-$i$ network services to be processed by VNF $f_k$\\
    \hline

  \end{tabular}
\end{table}

Consider a platform consisting of $N$ VMs, supporting $K$ VNFs, and operating in a (possibly infinite) scheduling horizon consisting
of $T$ slots with a normalized slot duration ``1''.
The VMs can be located separately at different host servers, or co-located at the same host server. Assume that every VM can admit network services, and output the results; see Fig.~\ref{model}. This is consistent with existing designs of NFV systems.
Let $\mathcal{N}=\{1,\ldots,N\}$ collect the $N$ VMs.
Let $f_k$ ($k=1,\ldots,K$) denote the $k$-th VNF which can only be processed at the VMs running the corresponding software.

Assume that every VM runs a single VNF. (The extended scenario with a VM running multiple VNFs will be discussed in Section V-B). Let $e_{kn}(t) \in \{0,1\}, \forall k,n,t,$ denote the ability of VM $n$ to process $f_k$ at time $t$. Let $e_{kn}(t)=1$, if VM $n$ is installed with VNF $f_k$; and $e_{kn}(t)=0$, otherwise. We have $\sum_k e_{kn}(t)=1, \forall n,t$. 

Let $\mathcal{I}$ collect all possible types of network services, and each type of network service is to be processed by a permuted sequence of $\{f_1,\ldots,f_K\}$ or its subset.
A network service needs to traverse among multiple VMs, until the service are processed by the related VNFs in the correct order.

We design up to $2|\mathcal{I}|K$ service queues at each VM $n$. (For analytic tractability, we assume here that the VMs have sufficiently large memories, and the queues do not overflow. Nevertheless, one of the constraints we consider in this paper is that the time-average lengths of the queues are finite. As will be proved in Section III-A, the time-average lengths of the queues can be adjusted and reduced through parameter reconfiguration.) Half of the queues buffers network services to be processed by $f_k$ ($k=1,\ldots,K$); and the other half buffers the results of the first half after processed by $f_k$ and to be routed to downstream VMs for further processing.
Let $Q_{kn}^{i}(t)$ denote the queue lengths of type-$i$ network services to be processed by VNF $f_k$ at VM $n$ per slot $t$. Let $q_{kn}^{i}(t)$ denote the queue lengths of type-$i$ network services after processed by VNF $f_{k'}$ at VM $n$, and to be processed by VNF $f_k$ at downstream VMs per slot $t$. Here, $f_{k'}$ denotes the VNF which needs to be run prior to $f_k$ for type-$i$ network services; $\mathbf{Q}(t)=\{Q_{kn}^{i}(t), \forall n \in \mathcal{N}, i \in \mathcal{I}, k=1,\ldots,K\}$, $\mathbf{q}(t)=\{q_{kn}^{i}(t), \forall n \in \mathcal{N}, i \in \mathcal{I}, k=1,\ldots,K\}$, and  $\mathbf{A}(t)=\{\mathbf{Q}(t),\mathbf{q}(t)\}$. 
Let $R_{kn}^{i,t}\leq R^{\max}$ denote the arrival rate (in the number of services per slot) of new type-$i$ network services to be processed by VNF $f_k$ at VM $n$, where $R^{\max}$ is the maximum arrival rate. 

We assume that any VM $n$ or directional link $[a,b] (\forall a, b \in \mathcal{N})$ can only process or transmit a single type of network service per slot.
Let $u_{k,ab}^{i}(t)$ denote the transmit rate (in the number of services per slot) of $Q_{ka}^{i}(t)$, over directional link $[a,b]$ at time slot $t$. Let $v_{k,ab}^{i}(t)$ denote the transmit rate of $q_{ka}^{i}(t)$, over directional link $[a,b]$ at time slot $t$. It is easy to see that at any time $t$, we have
\begin{align}\label{eq: line rate}
&u_{k,ab}^{i}(t) \geq 0,\;\; v_{k,ab}^{i}(t) \geq 0, \;\;\forall i, k, [a,b],\notag\\
&\sum_{i,k} [u_{k,ab}^{i}(t)+v_{k,ab}^{i}(t)]=u_{k^*,ab}^{i^*}(t) (\text{or} \;v_{k^*,ab}^{i^*}(t))\leq l_{ab}^{\max},
\end{align}
where $l_{ab}^{\max}$ is the maximum transmit rate of link $[a,b]$, and a type-$i$ service which is to be processed by VNF $f_{k^*}$, is selected to be forwarded. 

Let $p_{kn}^{i}(t)$ denote the processing rate of VM $n$ (in the number of services per slot) for the type-$i$ network services to be processed by VNF $f_{k}$ at time slot $t$. We have
\begin{equation}\label{eq: service rate}
p_{kn}^{i}(t) \geq 0, \;\; \sum_{i,k} p_{kn}^{i}(t)e_{kn}(t) =p_{k^*n}^{i^*}(t)\leq p_{n}^{\max},\; \forall i, k, n,
\end{equation}
where $p_{n}^{\max}$ is the maximum processing rate of VM $n$, and a type-$i$ service which is to be processed by $f_{k^*}$, is selected to be processed.

Therefore, the queue length of type-$i$ network services to be processed by VNF $f_{k}$ at VM $n$ follows, $\forall i,k,n,t,$
\begin{equation}\label{eq: Qt}
\begin{aligned}
&Q_{kn}^{i}(t+1)=\max\{Q_{kn}^{i}(t)-\sum_{b\in \mathcal{N}} u_{k,nb}^{i}(t)-p_{kn}^{i}(t)e_{kn}(t), 0\}\\
&+\sum_{a \in \mathcal{N}} u_{k,an}^{i}(t)+\sum_{c \in \mathcal{N}} v_{k,cn}^{i}(t)+R_{kn}^{i,t}.
\end{aligned}
\end{equation}
The queue length of type-$i$ network services after processed by VNF $f_{k'}$ at VM $n$, and to be processed by VNF $f_k$ at downstream VMs follows, $\forall i,k,n,t,$
\begin{equation}\label{eq: qt}
\begin{aligned}
&q_{kn}^{i}(t+1)=\max\{q_{kn}^{i}(t)-\sum_{d\in \mathcal{N}} v_{k,nd}^{i}(t), 0\}+p_{k'n}^{i}(t)e_{k'n}(t),
\end{aligned}
\end{equation}
where $q_{kn}^{i}(t)=0$ ($k\in \emptyset$) in the case that the type-$i$ network services complete processing at the terminal VM $n$ and are output from the platform.

We define the network is \textit{stable} if and only if the following is met \cite{neely2010stochastic}:
\begin{align}\label{eq: stable}
\lim_{T\rightarrow \infty}\frac{1}{T}\sum_{t=1}^T  \mathbb{E}[|\mathbf{A}(t)|]\leq \infty.
\end{align}



Considering the processing and routing cost of network services in the platform, we define the total cost of routing services over all links and running VNFs on all VMs per slot $t$, as given by:
\begin{equation}\label{eq: average_cost}
\Phi^t(\mathbf{e}^t,\mathbf{p}^t,\mathbf{u}^t,\mathbf{v}^t):= \phi_1^t(\mathbf{e}^t,\mathbf{p}^t)+\phi_2^t(\mathbf{u}^t,\mathbf{v}^t),
\end{equation}
where $\mathbf{e}^t=\{e_{kn}(t), \forall k, n\}$, $\mathbf{p}^t=\{p_{kn}^{i}(t), \forall i, k, n\}$, $\mathbf{u}^t=\{u_{k,ab}^{i}(t), \forall i, k, [a,b]\}$, and $\mathbf{v}^t=\{v_{k,ab}^{i}(t), \forall i, k, [a,b]\}$. In \eqref{eq: average_cost}, $\phi_1^t(\mathbf{e}^t,\mathbf{p}^t)=\sum_{i,k,n}\alpha_n^t (p_{kn}^{i}(t)e_{kn}(t))^2$ and $\phi_2^t(\mathbf{u}^t,\mathbf{v}^t)=\sum_{a,b,i,k}\beta_{[a,b]}^t [(u_{k,ab}^{i}(t))^2+(v_{k,ab}^{i}(t))^2]$ are the costs that the network service provider charges for usages of VMs and links, respectively, e.g., following a quadratic pricing policy~\cite{zhu2008nonlinear,chen2017learn}. For non-elastic network services, e.g., video streaming and multimedia applications, these with urgent demand for high resources (i.e., bandwidths and CPUs) would charge for high prices~\cite{zhu2008nonlinear}. Here, $\alpha_n^t$ is the time-varying price for service processing  at VM~$n$ and $\beta_{[a,b]}^t$ is the time-varying price for service routing over link $[a,b]$.

\section{Distributed Online Optimization of Placement and Operation of NFV}

The objective of this paper is to minimize the time-average cost of NFV on a network platform while preserving the stability of the platform [cf. \eqref{eq: stable}], under random network service arrivals and prices. This can be achieved by making stochastically optimal decisions on processing or routing network services at every VM and time slot in a distributed fashion.
Let $\mathbf{x}^t:=\{\mathbf{e}^t, \mathbf{p}^t,\mathbf{u}^t,\mathbf{v}^t\}$ and $\mathcal{X}:=\{\mathbf{x}^t,\forall t\}$. The problem of interest is to solve
\begin{equation}\label{eq: objective}
\begin{aligned}
\Phi^* &= \min_{\mathcal{X}} \lim_{T\rightarrow \infty}\frac{1}{T}\sum_{t=1}^T \mathbb{E}\{\Phi^t(\mathbf{x}^t)\} \\
&\text{s.t.} \quad \eqref{eq: line rate},\; \eqref{eq: service rate},\;\eqref{eq: Qt},\; \eqref{eq: qt},\; \eqref{eq: stable}, \;\forall t
\end{aligned}
\end{equation}
where the expectation of $\Phi^t(\mathbf{x}^t)$ is taken over all randomnesses. The service arrival rate $\{R_{kn}^{i,t}, \forall i,k,n,t\}$ and the routing and processing prices $\{\alpha_n^t,\beta_{[a,b]}^t, \forall [a,b],n,t\}$ are all random.

\subsection{Dual gradient and asymptotic optimality}

It is difficult to solve \eqref{eq: objective} since we aim to minimize the average cost over an infinite time horizon. In particular, the queue dynamics in \eqref{eq: Qt} and \eqref{eq: qt} couple the optimization variables in time, rendering intractability for traditional solvers. Combining \eqref{eq: Qt} and \eqref{eq: qt} with \eqref{eq: stable}, however, it can be shown that in the long term, the service processing and routing rates must satisfy the following necessary conditions of queue stability \cite{neely2010stochastic}
\begin{subequations}\label{eq: Queue-relax}
\begin{align}
\lim_{T\rightarrow \infty}\frac{1}{T}\sum_{t=1}^T \mathbb{E} &[\sum_{a \in \mathcal{N}} u_{k,an}^{i}(t)+\sum_{c \in \mathcal{N}} v_{k,cn}^{i}(t)+R_{kn}^{i,t} \notag\\
&-\sum_{b\in \mathcal{N}} u_{k,nb}^{i}(t)-p_{kn}^{i}(t)e_{kn}(t)]\leq 0, ~ \forall i,k,n.\\
\lim_{T\rightarrow \infty}\frac{1}{T}\sum_{t=1}^T \mathbb{E}&[p_{k'n}^{i}(t)e_{k'n}(t)-\sum_{d\in \mathcal{N}} v_{k,nd}^{i}(t)]\leq 0, \forall i,k,k',n.
\end{align}
\end{subequations}
As a result, \eqref{eq: objective} can be relaxed as
\begin{equation}\label{eq: objective-relax}
\begin{aligned}
\tilde{\Phi}^* &=\min_{\mathcal{X}} \lim_{T\rightarrow \infty}\frac{1}{T}\sum_{t=1}^T \mathbb{E}\{\Phi^t(\mathbf{x}^t)\},\\
&\text{s.t.} \quad \eqref{eq: line rate},\; \eqref{eq: service rate},\;\eqref{eq: Queue-relax},\;\forall t.
\end{aligned}
\end{equation}

Compared to \eqref{eq: objective}, problem \eqref{eq: objective-relax} eliminates the time coupling among variables $\{\mathbf{A}(t),\forall t\}$ by replacing \eqref{eq: Qt},\;\eqref{eq: qt} and \eqref{eq: stable} with \eqref{eq: Queue-relax}. Since \eqref{eq: objective-relax} is a relaxation of \eqref{eq: objective} with its optimal objective $\tilde{\Phi}^*\leq {\Phi}^*$, if one solves \eqref{eq: objective-relax} instead of \eqref{eq: objective}, it is prudent to derive the optimality bound of ${\Phi}^*$, provided that the solution $\mathcal{X}$ for \eqref{eq: objective-relax} is feasible for \eqref{eq: Qt},\;\eqref{eq: qt} and \eqref{eq: stable}, as will be shown in Theorem~1.

We can take a stochastic gradient approach to solving \eqref{eq: objective-relax} with asymptotical optimality guarantee. Concatenate the random parameters into a state vector $\bm{s}^t:=[R_{kn}^{i,t}, \alpha_n^t,\beta_{[a,b]}^t, \forall [a,b], i,k,n]$. For analytic tractability, $\bm{s}^t$ is assumed to be independent and identically distributed (i.i.d.) across slots. (In practice, $\bm{s}^t$ can be non-i.i.d. and even correlated. In such case, the algorithm proposed in this paper can be readily applied. Yet, performance analyses of the non-i.i.d. case can be obtained by generalizing the delayed Lyapunov drift techniques \cite{neely2010stochastic}).
Then, we can rewrite \eqref{eq: objective-relax} as
\begin{subequations}\label{eq: objective-instant}
\begin{align}
\tilde{\Phi}^*& = \min_{\mathcal{X}} ~~~\mathbb{E}\{\Phi^t(\mathcal{X}(\bm{s}^t);\bm{s}^t)\}\\
\text{s.t.} \quad &u_{k,ab}^{i}(\bm{s}^t) \geq 0, \notag\\
&\sum_{i,k} [u_{k,ab}^{i}(\bm{s}^t)+v_{k,ab}^{i}(\bm{s}^t)]=u_{k^*,ab}^{i^*}(\bm{s}^t) (\text{or} \;v_{k^*,ab}^{i^*}(\bm{s}^t))\notag\\
&\leq l_{ab}^{\max},\\
&p_{kn}^{i}(\bm{s}^t) \geq 0, \;\; \sum_{i,k} p_{kn}^{i}(\bm{s}^t) e_{kn}(\bm{s}^t)=p_{k^*n}^{i^*}(\bm{s}^t)\leq p_{n}^{\max},\\
&\mathbb{E} [\sum_{a \in \mathcal{N}} u_{k,an}^{i}(\bm{s}^t)+\sum_{c \in \mathcal{N}} v_{k,cn}^{i}(\bm{s}^t)+R_{kn}^{i,t} \notag\\
&-\sum_{b\in \mathcal{N}} u_{k,nb}^{i}(\bm{s}^t)-p_{kn}^{i}(\bm{s}^t)e_{kn}(\bm{s}^t)]\leq 0, \label{10d}\\
&\mathbb{E}[p_{k'n}^{i}(\bm{s}^t)e_{k'n}(\bm{s}^t)-\sum_{d\in \mathcal{N}} v_{k,nd}^{i}(\bm{s}^t)]\leq 0, \label{10e}
\end{align}
\end{subequations}
where $p_{kn}^{i}(\bm{s}^t):=p_{kn}^{i}(t), e_{kn}(\bm{s}^t)=e_{kn}(t), u_{k,ab}^{i}(\bm{s}^t):=u_{k,ab}^{i}(t), v_{k,ab}^{i}(\bm{s}^t):=v_{k,ab}^{i}(t), ~\forall [a,b],i,k,n$, and $\Phi^t(\mathcal{X}(\bm{s}^t);\bm{s}^t):=\Phi^t(\mathbf{x}^t)$. Formulation \eqref{eq: objective-instant} explicitly indicates the dependence of the decision variables $\{\mathbf{e}^t, \mathbf{p}^t,\mathbf{u}^t,\mathbf{v}^t\}$ on the realization of $\bm{s}^t$.

Let $\mathcal{F}^t$ denote the set of $\{\mathbf{e}^t, \mathbf{p}^t,\mathbf{u}^t,\mathbf{v}^t\}$ satisfying constraints \eqref{eq: line rate} and \eqref{eq: service rate} per $t$, while ${\lambda_{kn,1}^i}$ and ${\lambda_{kn,2}^i}$ denote the Lagrange multipliers associated with the constraints \eqref{10d} and \eqref{10e}. With a convenient notation $\bm{\lambda}:=\{\lambda_{kn,1}^i, \lambda_{kn,2}^i, \forall i,k,n\}$, the partial Lagrangian function of \eqref{eq: objective-instant} is given by
\begin{align}\label{eq.Lam}
L(\mathcal{X},\bm{\lambda}) := \mathbb{E}[L^t(\mathbf{x}^t,\bm{\lambda})]
\end{align}
where the instantaneous Lagrangian is given by \begin{align}\label{eq.instantaneous_Lag}
&L^t(\mathbf{x}^t,\bm{\lambda}):= \Phi^t(\mathbf{x}^t)+ \sum_{i,k,n} \lambda_{kn,1}^i(t)  ( \sum_{a \in \mathcal{N}} u_{k,an}^{i}(t)\notag\\&+\sum_{c \in \mathcal{N}} v_{k,cn}^{i}(t)+R_{kn}^{i,t} -\sum_{b\in \mathcal{N}} u_{k,nb}^{i}(t)-p_{kn}^{i}(t)e_{kn}(t))\notag\\&+ \sum_{i,k,k',n} \lambda_{kn,2}^i(t) (p_{k'n}^{i}(t)e_{k'n}(t)-\sum_{d\in \mathcal{N}} v_{k,nd}^{i}(t)).
\end{align}
Notice that the instantaneous objective $\Phi^t(\mathbf{x}^t)$ and the instantaneous constraints associated with $\bm{\lambda}$ are parameterized by the observed state $\bm{s}^t$ at time $t$; thus the instantaneous Lagrangian can be written as $L^t(\mathbf{x}^t,\bm{\lambda})=L(\mathcal{X}(\bm{s}^t),\bm{\lambda};\bm{s}^t)$, and $L(\mathcal{X},\bm{\lambda})=\mathbb{E}[L(\mathcal{X}(\bm{s}^t),\bm{\lambda};\bm{s}^t)]$.

As a result, the Lagrange dual function is given by
\begin{equation}
D(\bm{\lambda}):=\min_{\{\mathbf{x}^t \in \mathcal{F}^t\}_t}\; L(\mathcal{X},\bm{\lambda}), \label{eq.dual func}
\end{equation}
and the dual problem of \eqref{eq: objective-relax} is: $\max_{\bm{\lambda}\geq 0}\; D(\bm{\lambda})$, where $``\geq"$ is defined entry-wise.

For the dual problem, we can take a standard gradient method to obtain the optimal $\bm{\lambda}^*$ \cite{himmelblau1972applied}.
This amounts to running the following iterations slot by slot
\begin{subequations}\label{eq.subgradient}
\begin{align}
\lambda_{kn,1}^i(t+1) = [\lambda_{kn,1}^i(t)+ \epsilon g_{\lambda_{kn,1}^i}(t)]^+,\, \quad \forall i,k,n, \\
\lambda_{kn,2}^i(t+1) = [\lambda_{kn,2}^i(t)+ \epsilon g_{\lambda_{kn,2}^i}(t)]^+,\, \quad \forall i,k,n.
\end{align}
\end{subequations}
where $\epsilon>0$ is an appropriate stepsize. The gradient $\bm{g}(t):=[g_{\lambda_{kn,1}^i}(t), g_{\lambda_{kn,2}^i}(t), \forall i,k,n]$
can be expressed as
\begin{subequations}\label{eq.subg}
\begin{align}
g_{\lambda_{kn,1}^i}(t)= \mathbb{E}&[\sum_{a \in \mathcal{N}} u_{k,an}^{i}(t)+\sum_{c \in \mathcal{N}} v_{k,cn}^{i}(t)+R_{kn}^{i,t} \notag\\
&-\sum_{b\in \mathcal{N}} u_{k,nb}^{i}(t)-p_{kn}^{i}(t)e_{kn}(t)],\\
g_{\lambda_{kn,2}^i}(t)= \mathbb{E}&[ p_{k'n}^{i}(t)e_{k'n}(t)-\sum_{d\in \mathcal{N}} v_{k,nd}^{i}(t)],
\end{align}
\end{subequations}
where $\mathbf{x}^t:=\{\mathbf{e}^t, \mathbf{p}^t,\mathbf{u}^t,\mathbf{v}^t\}$ is given by
\begin{align}\label{eq.sub}
&\mathbf{x}^t = \argmin_{\mathbf{x}^t \in \mathcal{F}^t} L^t(\mathbf{x}^t,\bm{\lambda}).
\end{align}
%

Note that a challenge associated with \eqref{eq.subg} is sequentially taking expectations over the random vector $\bm{s}^t$ to compute the gradient $\bm{g}(t)$. This would require high-dimensional integration over an unknown probabilistic distribution function of $\bm{s}^t$; or equivalently, computing the corresponding time-averages over an infinite time horizon. Such a requirement is impractical since the computational complexity could be prohibitively high.

To bypass this impasse, we propose to rely on a stochastic dual gradient approach, which is able to combat randomness in the absence of the a-priori knowledge on the statistics of variables. Specifically, dropping $\mathbb{E}$ from \eqref{eq.subg}, we propose the following iterations
\begin{subequations}\label{eq.stoc-subgrad}
\begin{align}
\tilde{\lambda}_{kn,1}^i(t+1) &= \tilde{\lambda}_{kn,1}^i(t)+\epsilon [\sum_{a \in \mathcal{N}} u_{k,an}^{i}(t)+\sum_{c \in \mathcal{N}} v_{k,cn}^{i}(t)\notag\\
&+R_{kn}^{i,t}-\sum_{b\in \mathcal{N}} u_{k,nb}^{i}(t)-p_{kn}^{i}(t)e_{kn}(t)]^+ ,\\
\tilde{\lambda}_{kn,2}^i(t+1) &= \tilde{\lambda}_{kn,2}^i(t)+\epsilon [p_{k'n}^{i}(t)e_{k'n}(t)-\sum_{d\in \mathcal{N}}  v_{k,nd}^{i}(t)]^+,
\end{align}
\end{subequations}
where $\tilde{\bm{\lambda}}^t=\{\tilde{\lambda}_{kn,1}^i(t),\tilde{\lambda}_{kn,2}^i(t), \forall i,k,n\}$ collects the stochastic estimates of the variables in \eqref{eq.subgradient}, and $\mathbf{x}^t(\tilde{\bm{\lambda}})$ is obtained by
solving \eqref{eq.sub} with $\bm{\lambda}$ replaced by $\tilde{\bm{\lambda}}^t$, $\forall i,k,n$.

Note that the interval of updating \eqref{eq.stoc-subgrad} coincides with slots. In other words, the update of \eqref{eq.stoc-subgrad} is an {\em online} approximation of \eqref{eq.subgradient} based on the {\em instantaneous} decisions $\mathbf{x}^t(\tilde{\bm{\lambda}}^t)$ per slot $t$. This stochastic approach becomes possible due to the decoupling of optimization variables over time in \eqref{eq: objective-relax}. 

Relying on the so-called Lyapunov optimization technique in \cite{neely2010stochastic}, we can formally establish that:
\begin{theorem}
If $\bm{s}^t$ is i.i.d. over slots, then the time-average cost of \eqref{eq: objective-instant} with the multipliers updated by \eqref{eq.stoc-subgrad} satisfies
\begin{subequations}
\begin{align}
    \Phi^*\leq  \lim_{T\rightarrow \infty} \frac{1}{T} \sum_{t=0}^{T-1} \mathbb{E}\left[\Phi^t(\mathbf{x}^t))\right] \leq \Phi^* + \epsilon B
\end{align}
where $B=\frac{9}{2}(N^{\max} l^{\max})^2+\frac{3}{2}({R^{\max}}^2+{p^{\max}}^2)$, $N^{\max}$ is the maximum degree of VMs, $l^{\max}=\max_{[a,b]} l_{ab}^{\max}$ and $p^{\max}=\max_{n} p_{n}^{\max}$; $\Phi^*$ is the optimal value of \eqref{eq: objective} under any feasible control policy (i.e., the processing and routing decisions per VM), even if that relies on knowing future realizations of random variables.

Assume that there exists a stationary policy $\mathcal{X}$ and $\mathbb{E} [\sum_{a \in \mathcal{N}} u_{k,an}^{i}(t)+\sum_{c \in \mathcal{N}} v_{k,cn}^{i}(t)+R_{kn}^{i,t}-\sum_{b\in \mathcal{N}} u_{k,nb}^{i}(t)-p_{kn}^{i}(t)e_{kn}(t)]\leq -\zeta$, and $\mathbb{E}[ p_{k'n}^{i}(t)e_{k'n}(t)-\sum_{d\in \mathcal{N}}  v_{k,nd}^{i}(t)]\leq -\zeta$, where $\zeta>0$ is a slack vector constant.
Then all queues are stable, and the time-average queue length satisfies:

\begin{align}
\lim_{T\rightarrow \infty}\frac{1}{T}\sum_{t=1}^T \sum_{i,k,n} \mathbb{E}[ Q_{kn}^{i}(t)+q_{kn}^{i}(t)]= {\cal O}(\frac{1}{\epsilon}).
\end{align}
\end{subequations}
\end{theorem}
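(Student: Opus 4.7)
The plan is to apply a standard Lyapunov drift-plus-penalty argument in the style of Neely, treating the stochastic dual estimates $\tilde{\bm{\lambda}}^t$ as virtual queues. I begin by defining the quadratic Lyapunov function $V(\tilde{\bm{\lambda}}^t):=\tfrac{1}{2}\sum_{i,k,n}[(\tilde{\lambda}_{kn,1}^i(t))^2+(\tilde{\lambda}_{kn,2}^i(t))^2]$ and the one-slot drift $\Delta(t):=V(\tilde{\bm{\lambda}}^{t+1})-V(\tilde{\bm{\lambda}}^t)$. Squaring the update in \eqref{eq.stoc-subgrad} and using $([x]^+)^2\leq x^2$ gives $\Delta(t)\leq \epsilon(\tilde{\bm{\lambda}}^t)^{\top}\hat{\bm{g}}(t)+\tfrac{\epsilon^2}{2}\|\hat{\bm{g}}(t)\|^2$, with $\hat{\bm{g}}(t)$ the per-slot (non-expected) stochastic gradient. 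The norm $\|\hat{\bm{g}}(t)\|^2$ is bounded deterministically using $u_{k,ab}^i,v_{k,ab}^i\leq l^{\max}$, $p_{kn}^i\leq p^{\max}$, $R_{kn}^{i,t}\leq R^{\max}$, and the fact that each queue is fed or drained by at most $N^{\max}$ neighbours; this yields the constant $B=\tfrac{9}{2}(N^{\max}l^{\max})^2+\tfrac{3}{2}((R^{\max})^2+(p^{\max})^2)$ quoted in the theorem.

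Adding the penalty $\epsilon\Phi^t(\mathbf{x}^t)$ to both sides collapses the right-hand side exactly into $\epsilon L^t(\mathbf{x}^t,\tilde{\bm{\lambda}}^t)+\tfrac{\epsilon^2}{2}\|\hat{\bm{g}}(t)\|^2$ by \eqref{eq.instantaneous_Lag}. Because \eqref{eq.sub} selects $\mathbf{x}^t$ to minimize $L^t(\cdot,\tilde{\bm{\lambda}}^t)$ over $\mathcal{F}^t$, any competitor $\mathbf{x}^{t,\star}\in\mathcal{F}^t$ satisfies $L^t(\mathbf{x}^t,\tilde{\bm{\lambda}}^t)\leq L^t(\mathbf{x}^{t,\star},\tilde{\bm{\lambda}}^t)$. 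Taking the conditional expectation given $\tilde{\bm{\lambda}}^t$ and using the i.i.d.\ assumption on $\bm{s}^t$, I substitute the optimal stationary randomized policy that attains $\Phi^*$ with zero expected slack in \eqref{10d}--\eqref{10e}; the multiplier-slack cross terms vanish, leaving $\mathbb{E}[\Delta(t)+\epsilon\Phi^t(\mathbf{x}^t)\mid\tilde{\bm{\lambda}}^t]\leq \epsilon\Phi^*+\epsilon^2 B$. Telescoping for $t=0,\ldots,T-1$, dividing by $\epsilon T$, using $V\geq 0$, and letting $T\to\infty$ produce the cost upper bound $\Phi^*+\epsilon B$; the matching lower bound $\Phi^*$ is immediate since $\mathbf{x}^t$ is feasible for \eqref{eq: objective}.

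For the queue-length claim, I invoke the Slater-type hypothesis: a stationary policy exists whose expected slacks in \eqref{10d}--\eqref{10e} are at most $-\zeta$. Feeding this policy into the Lagrangian bound and retaining the (now negative) multiplier-slack terms yields $\mathbb{E}[\Delta(t)\mid\tilde{\bm{\lambda}}^t]\leq -\epsilon\zeta\sum_{i,k,n}(\tilde{\lambda}_{kn,1}^i(t)+\tilde{\lambda}_{kn,2}^i(t))+\epsilon\Phi^{\max}+\epsilon^2 B$, where $\Phi^{\max}$ uniformly bounds $|\Phi^t|$ on $\mathcal{F}^t$. Taking expectations, summing, and telescoping give an $\mathcal{O}(1/\epsilon)$ bound on the time average of $\tilde{\bm{\lambda}}^t$.

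The main obstacle I anticipate is transferring this bound from the dual multipliers $\tilde{\bm{\lambda}}^t$ back to the physical queues $Q_{kn}^i(t),q_{kn}^i(t)$ that actually appear in the theorem. I plan to show by induction on $t$ that $\tilde{\lambda}_{kn,1}^i(t)$ shadows $\epsilon Q_{kn}^i(t)$ (and $\tilde{\lambda}_{kn,2}^i(t)$ shadows $\epsilon q_{kn}^i(t)$) up to a per-slot discrepancy arising from the $\max\{\cdot,0\}$ operator in \eqref{eq: Qt}--\eqref{eq: qt} versus the projection in \eqref{eq.stoc-subgrad}. Since this gap only grows when a physical queue nearly empties, in which case the unused service is bounded by $p^{\max}+N^{\max}l^{\max}$, the cumulative discrepancy is $\mathcal{O}(1)$ per slot, so the multiplier bound carries over to the desired $\lim T^{-1}\sum_t\sum_{i,k,n}\mathbb{E}[Q_{kn}^i(t)+q_{kn}^i(t)]=\mathcal{O}(1/\epsilon)$.
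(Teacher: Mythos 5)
Your proposal follows essentially the same drift-plus-penalty route as the paper's Appendices A and B: a quadratic Lyapunov function, a deterministic bound on the squared per-slot increments that reproduces the same constant $B=\frac{9}{2}(N^{\max}l^{\max})^2+\frac{3}{2}({R^{\max}}^2+{p^{\max}}^2)$, comparison against a stationary randomized competitor (weak duality for the cost bound, the Slater slack $-\zeta$ for the backlog bound), and telescoping. The one structural difference is that you put the Lyapunov function on the dual estimates $\tilde{\bm{\lambda}}^t$, whereas the paper puts it directly on the physical backlogs $\mathbf{A}(t)=\{\mathbf{Q}(t),\mathbf{q}(t)\}$ and evaluates the Lagrangian at $\bm{\lambda}=\epsilon\mathbf{A}(t)$. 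Since $\tilde{\bm{\lambda}}^t=\epsilon\mathbf{A}(t)$ by construction, the two computations coincide up to a factor $\epsilon^2$, but the paper's choice makes your self-identified ``main obstacle'' --- transferring the multiplier bound back to the physical queues --- vanish: the drift inequality derived from \eqref{eq: Qt} and \eqref{eq: qt} already contains $Q_{kn}^i(t)$ and $q_{kn}^i(t)$, so no shadowing argument is needed.

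Two points need repair. First, the bookkeeping in your queue-length step is internally inconsistent: telescoping $\mathbb{E}[\Delta(t)]\leq-\epsilon\zeta\sum\tilde{\lambda}(t)+\epsilon\Phi^{\max}+\epsilon^2B$ yields a time-average of $\sum\tilde{\bm{\lambda}}^t$ that is ${\cal O}(1)$ (namely $(\Phi^{\max}+\epsilon B)/\zeta$), not ${\cal O}(1/\epsilon)$ as you state; the ${\cal O}(1/\epsilon)$ rate appears only after dividing by $\epsilon$ to pass from $\tilde{\bm{\lambda}}^t$ to $\mathbf{A}(t)=\tilde{\bm{\lambda}}^t/\epsilon$. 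As written, an ${\cal O}(1/\epsilon)$ multiplier bound would give ${\cal O}(1/\epsilon^2)$ queues. Second, your transfer argument is too loose: a discrepancy that grows by ``${\cal O}(1)$ per slot'' accumulates to ${\cal O}(T)$ and destroys the time-average claim. What is needed is that the gap between $\tilde{\lambda}_{kn,1}^i(t)/\epsilon$ and $Q_{kn}^i(t)$ stays \emph{uniformly} bounded over $t$ (or, as the paper implicitly does, that the two recursions are identified outright), and the cleanest fix is simply to run the drift on the physical recursions \eqref{eq: Qt}--\eqref{eq: qt} from the outset so that the question never arises.
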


\begin{proof}
See Appendices A and B.
\end{proof}

Theorem 1 asserts that the time-average cost of \eqref{eq: objective-instant} obtained by the stochastic dual gradient approach converges to an ${\cal O(\epsilon)}$ neighborhood of the optimal solution, where the region of neighborhood vanishes as the stepsize $\epsilon \rightarrow 0$. The typical tradeoff from the stochastic network optimization holds in this case \cite{neely2010stochastic}: an ${\cal O}(1/\epsilon)$ queue length is necessary, when an ${\cal O(\epsilon)}$ close-to-optimal cost is achieved. Different from \cite{neely2010stochastic}, here the Lagrange dual theory is utilized to simplify the arguments, as shown in Appendices A and B.

\begin{remark}
The asymptotic approximation of the proposed distributed online approach to the cost lower bound achieved offline in a posterior manner is rigorously proved. The lower bound corresponds to the assumption that all the randomnesses are precisely known in prior and the optimal decisions over infinite time-horizon are all derived. This lower bound would violate causality and be computationally prohibitive to achieve, even in an offline fashion, given an infinite number of variables. Theorem 1 indicates that the proposed approach can increasingly approach the lower bound by increasing the tolerance to queue backlogs or delays.
\end{remark}

\subsection{Distributed online implementation}

The dual iteration \eqref{eq.stoc-subgrad} coincides with \eqref{eq: Qt} and \eqref{eq: qt} for $\tilde{\lambda}_{kn,1}^i(t)/\epsilon= Q_{kn}^{i}(t)$ and $\tilde{\lambda}_{kn,2}^i(t)/\epsilon= q_{kn}^{i}(t), \forall i,k,n,t$; this can be interpreted by using the concept of virtual queue of this parallelism \cite{neely2010stochastic}.
With $\tilde{\lambda}_{kn,1}^i(t)$ substituted by $\epsilon Q_{kn}^{i}(t)$ and $\tilde{\lambda}_{kn,2}^i(t)$ substituted by $\epsilon q_{kn}^{i}(t)$, we can obtain the desired $\mathbf{x}^t(\mathbf{A}(t))$ by solving the following problem:
\begin{equation}\label{sub-problem1}
\begin{aligned}
\min_{\mathbf{x}^t \in \mathcal{F}^t} &\frac{1}{\epsilon}\Phi^t(\mathbf{x}^t) + \sum_{i,k,n} Q_{kn}^{i}(t) [ \sum_{a \in \mathcal{N}} u_{k,an}^{i}(t)+\sum_{c \in \mathcal{N}} v_{k,cn}^{i}(t) \\
&+R_{kn}^{i,t}-\sum_{b\in \mathcal{N}} u_{k,nb}^{i}(t)-p_{kn}^{i}(t)e_{kn}(t)] \\
&+\sum_{i,k,k',n} q_{kn}^{i}(t)[p_{k'n}^{i}(t)e_{k'n}^{i}(t)-\sum_{d\in \mathcal{N}} v_{k,nd}^{i}(t)].
\end{aligned}
\end{equation}
Through rearrangement, \eqref{sub-problem1} is equivalent to
\begin{equation}\label{sub-problem2}
\min_{\mathbf{x}^t \in \mathcal{F}^t} ~~\sum_{i,k,n,b\in \mathcal{N}}[f_1(\mathbf{e}^t, \mathbf{p}^t)+f_2(\mathbf{u}^t)+f_3(\mathbf{v}^t)]
\end{equation}
where
{\small \begin{subequations}\label{sub-problem3}
\begin{align}
&f_1(\mathbf{e}^t,\mathbf{p}^t)=  [\frac{\alpha_n^t}{\epsilon}(p_{kn}^{i}(t))^2-(Q_{kn}^{i}(t)-q_{k''n}^{i}(t))p_{kn}^{i}(t)]e_{kn}(t); \label{sub-problem3a}\\
&f_2(\mathbf{u}^t)=[\frac{\beta_{[n,b]}^t}{\epsilon}(u_{k,nb}^{i}(t))^2-(Q_{kn}^{i}(t)-Q_{kb}^{i}(t))u_{k,nb}^{i}(t);\label{sub-problem3b}\\
&f_3(\mathbf{v}^t)=\frac{\beta_{[n,b]}^t}{\epsilon}(v_{k,nb}^{i}(t))^2-(q_{kn}^{i}(t)-Q_{kb}^{i}(t))v_{k,nb}^{i}(t).\label{sub-problem3c}
\end{align}
\end{subequations}}Here, $f_{k''}$ denotes the VNF to be processed after $f_k$ for type-$i$ network services.

Problem \eqref{sub-problem2} can be readily solved by decoupling between $e_{kn}(t)$, $p_{kn}^{i}(t)$, $u_{k,nb}^{i}(t)$ and $v_{k,nb}^{i}(t)$, and between the VMs.
Specifically, \eqref{sub-problem2} can be decoupled into the following subproblems per VM or per inter-VM link:
\begin{subequations}\label{sub-problem4}
\begin{align}
&\min_{\mathbf{e}^t, \mathbf{p}^t}\quad f_1(\mathbf{e}^t; \mathbf{p}^t),\label{sub-problem4a}\\
& \min_{\mathbf{u}^t}\quad f_2(\mathbf{u}^t);\label{sub-problem4b}\\
&\min_{\mathbf{v}^t}\quad f_3(\mathbf{v}^t)\label{sub-problem4c}.
\end{align}
\end{subequations}

Problem \eqref{sub-problem4a} is a mixed integer programming. Its solution can be obtained by comparing the minimums of $f_1(\mathbf{e}^t, \mathbf{p}^t)$ separately achieved under $e_{kn}(t)=0$ and 1. In the case of $e_{kn}(t)=0$, $f_1(\mathbf{e}^t, \mathbf{p}^t)=0$. In the case of $e_{kn}(t)=1$, \eqref{sub-problem4a} becomes the minimization of a quadratic function of $p_{kn}^{i}(t)$, where the optimal solution is given by
\begin{equation}\label{p_kn}
~~~{p_{kn}^{i}}^*(t) =\min\{\max\{ \frac{\epsilon(Q_{kn}^{i}(t)-q_{k''n}^{i}(t))}{2\alpha_n^t}, 0\},p_n^{\max}\}, ~\forall i,k.
\end{equation}
Then, the optimal objective of \eqref{sub-problem4a} can be obtained by substituting \eqref{p_kn} into $f_1(\mathbf{e}^t, \mathbf{p}^t)$, as given by
\begin{equation}\label{weights1}
~~P_{kn}^i:=
    \begin{cases}
       -\frac{\epsilon(Q_{kn}^{i}(t)-q_{k''n}^{i}(t))^2}{4\alpha_n^t}, ~~\text{if} ~ Q_{kn}^{i}(t)-q_{k''n}^{i}(t)>0; \\
       \quad \quad \quad 0,  \quad \quad \quad\quad\quad~\text{if} ~Q_{kn}^{i}(t)-q_{k''n}^{i}(t)\leq 0. \\
    \end{cases}
 \end{equation}
Since every VM only runs a single VNF (i.e., $\sum_k e_{kn}(t)=1, \forall n,t$), we have
\begin{equation}\label{e_kn}
{e_{kn}}^*(t)=
    \begin{cases}
       1, \quad \text{if} ~~k=\arg \min_k P_{kn}^i; \\
       0, \quad \text{otherwise}. \\
    \end{cases}
    \end{equation}

Problems \eqref{sub-problem4b} and \eqref{sub-problem4c} are the minimizations of quadratic functions of $\mathbf{u}^t$ and $\mathbf{v}^t$, respectively. Like \eqref{sub-problem4a} under $e_{kn}(t)=1$, the optimal solutions for \eqref{sub-problem4b} and \eqref{sub-problem4c} are
\begin{equation}\label{eq.first order2}
{u_{k,nb}^{i}}^*(t) =\min\{\max\{  \frac{\epsilon(Q_{kn}^{i}(t)-Q_{kb}^{i}(t))}{2\beta_{[n,b]}^t}, 0\},l_{ab}^{\max}\},~\forall i,k;\nonumber
\end{equation}
\begin{equation}
{v_{k,nb}^{i}}^*(t) =\min\{\max\{\frac{\epsilon(q_{kn}^{i}(t)-Q_{kb}^{i}(t))}{2\beta_{[n,b]}^t},0\},l_{ab}^{\max}\},~\forall i,k;
\end{equation}
with their corresponding objectives given by

\begin{equation}
U_{k,nb}^i:=
    \begin{cases}
       -\frac{\epsilon(Q_{kn}^{i}(t)-Q_{kb}^{i}(t))^2}{4\beta_{[n,b]}^t} , ~~\text{if} ~ Q_{kn}^{i}(t)-Q_{kb}^{i}(t)>0; \\
       \quad \quad \quad 0,  \quad \quad \quad\quad~~~\text{if} ~Q_{kn}^{i}(t)-Q_{kb}^{i}(t)\leq 0; \\
    \end{cases}\nonumber
\end{equation}
\begin{equation}
V_{k,nb}^i:=
    \begin{cases}
       -\frac{\epsilon(q_{kn}^{i}(t)-Q_{kb}^{i}(t))^2}{4\beta_{[n,b]}^t} , ~~\text{if} ~ q_{kn}^{i}(t)-Q_{kb}^{i}(t)>0; \\
       \quad \quad \quad 0,  \quad \quad \quad\quad~~\text{if} ~q_{kn}^{i}(t)-Q_{kb}^{i}(t)\leq 0. \\
    \end{cases}\nonumber
\end{equation}
\begin{equation}\label{weights2}
\end{equation}

Recall that any VM $n$ or directional link $[a,b]$ can only process or transmit a single network service per slot.
At each slot, a VM can prioritize the queues of different service types to be processed by different VNFs, and process or route services from the queue with the highest priority.
The priority is ranked based on the objectives $P_{kn}^i$, $U_{k,nb}^i$ and $V_{k,nb}^i$ in \eqref{weights1} and \eqref{weights2}.
For this reason, we refer to $P_{kn}^i$, $U_{k,nb}^i$ and $V_{k,nb}^i$ as queue-price objectives. The processing and routing decisions can be made by one-to-one mapping between the queues and outgoing links/processor to minimize the total of selected non-zero objectives, as summarized in Algorithm 1.

\begin{algorithm}[t]
\caption{Distributed Online Optimization of NFV}\label{algo}
\begin{algorithmic}[1]
\For {$t=1,2\dots$}
\State Each VM $n$ observes the queue lengths of its own and its one-hop neighbors.
\State Install VNF $f_k$ at VM $n$ based on \eqref{e_kn}.
\State Repeatedly send network services to the VM processor or outgoing links with the minimum non-zero
       queue-price objectives in \eqref{weights1} and \eqref{weights2}, using the optimal rates derived in \eqref{p_kn} and \eqref{eq.first order2}, until either the processor and all outgoing links are scheduled or the remaining objectives are all zero.
\State Update $Q_{kn}^{i}(t)$ and $q_{kn}^{i}(t)$ for all nodes and services via the dynamics \eqref{eq: Qt} and \eqref{eq: qt}.
\EndFor
\end{algorithmic}
\end{algorithm}

Note that Algorithm 1 is decentralized, since every VM only needs to know the queue lengths of its own and its immediate neighbors. 
Optimal decisions of a VM, locally made by comparing the queue-price objectives, comply with (17) and therefore preserve the asymptotic optimality of the entire network, as dictated in Theorem 1. With decentralized decision makings, Algorithm 1 can readily provide improved flexibility and scalability, alleviate signaling burden, and reduce service latency for practical NFV systems.

\begin{figure}[t]
\centering
\includegraphics[height=0.34\textwidth]{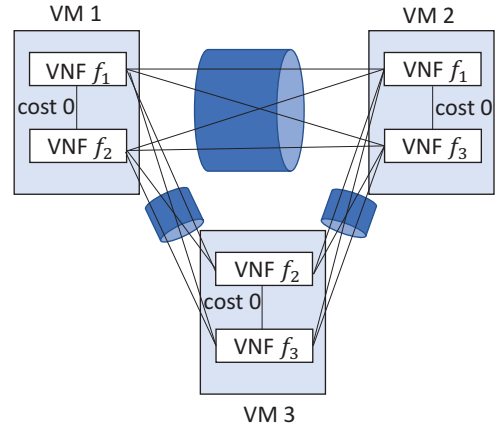}
\caption{An illustration on VMs running multiple VNFs, where VNFs can be interpreted as ``VMs'' and VMs can be interpreted as ``clusters of VMs.'' Then the VM-based online optimization developed in this paper can be readily applied to the ``VMs.'' }
\label{fig: nfv2}
\end{figure}

Also note that Algorithm~1 can be readily extended to general scenarios where a VM runs multiple VNFs; see Fig.~\ref{fig: nfv2}. In this case, all VNFs can be first interpreted as separate ``VMs'' in the context of the baseline case of a VNF per VM, and colocated VNFs at a VM then become a cluster of multiple ``VMs.'' No cost incurs on the connections between the ``VMs'' within the cluster.
The only difference from the baseline scenario of a VNF per VM, as described in Algorithm~1, is that, between the clusters, only a pair of ``VMs'' which are respectively from the two clusters and the most cost-effective to transmit workloads, can be activated.
This can be achieved by comparing the price weights of the links to pick up the most cost-effective link between clusters.
The optimal decisions of processing at each of the ``VMs'' stay unchanged.



\section{Optimal Placement and Operation of NFV at Different Timescales}

In this section, we consider a more practical scenario where the placement of VNFs is carried out at the VMs at a much larger time interval, i.e., at time $\tau= mT_\Delta (m=1,2,\ldots$), rather than on a slot basis.
This is because the installation of VNFs at the VMs could cause interruptions to network service provisions. We prove that if the placement and the operation of NFV are jointly optimized at two different timescales, the aforementioned asymptotic optimality of the proposed approach can be preserved.

\subsection{Two-timescale placement and operation}
By evaluating \eqref{sub-problem4} at two different timescales, the placement of VNFs, and the processing and routing of network services, can be carried out as follows:
\begin{itemize}
\item \textit{Placement of VNFs at a $T_\Delta$-slot interval:} At time slot $\tau= mT_\Delta$, each VM $n$ decides on the VNF to install to minimize the expectation of the sum of $f_1(\mathbf{e}^t, \mathbf{p}^t)$ in \eqref{sub-problem4a} over the time window $t=\{\tau, \ldots, \tau+T_\Delta-1\}$, i.e., $\mathbb{E}\big\{\sum_{t=\tau}^{\tau+T_\Delta-1} f_1(\mathbf{e}^t, \mathbf{p}^t)\big\}$, as given by

\begin{align}\label{twoscale1}
\min_{\mathbf{e}^t}&~~\mathbb{E}\Big\{\sum_{t=\tau}^{\tau+T_\Delta-1}\sum_{i,k}[\frac{\alpha_n^t}{\epsilon}(p_{kn}^{i}(t))^2-(Q_{kn}^{i}(t)\notag
\\&-q_{k''n}^{i}(t)) p_{kn}^{i}(t)]e_{kn}(\tau)\Big\}.
\end{align}

\item \textit{Processing and routing of network services per slot $t$:} Per slot $t$, each VM processes and routes network services, following Algorithm 1, given the placement decisions of VNFs given by \eqref{twoscale1}.

\item \textit{Queue update:} Each VM updates its queues $Q_{kn}^{i}(t)$ and $q_{kn}^{i}(t)$ at every slot $t$ based on \eqref{eq: Qt} and \eqref{eq: qt}.
\end{itemize}

Note that the optimal solutions to \eqref{twoscale1} require future knowledge on service arrivals $\{R_{kn}^{i,t}, t=\tau,\ldots, \tau+T_\Delta-1\}$, and the prices of service processing and routing $\{\alpha_n^t,\beta_{[a,b]}^t, t=\tau,\ldots, \tau+T_\Delta-1\}$. This would violate causality. We propose to take an approximation by setting the future queue backlogs as their current backlogs at slot $\tau = mT$, as given by
\begin{subequations}\label{approximation}
\begin{align}
\hat{Q}_{kn}^{i}(t) &= Q_{kn}^{i}(\tau);\\
\hat{q}_{kn}^{i}(t) &= q_{kn}^{i}(\tau),~\forall t = \tau, \ldots, \tau+T_\Delta-1,
\end{align}
\end{subequations}
where $\hat{Q}_{kn}^{i}(t)$ and $\hat{q}_{kn}^{i}(t)$ are the approximated queue backlogs. Taking the approximation of \eqref{approximation} and that the stochastic variables $\bm{s}^t:=[R_{kn}^{i,t}, \alpha_n^t,\beta_{[a,b]}^t, \forall [a,b], i,k,n]$ to be invariant in the coming time window, \eqref{twoscale1} can be reduced to the per-slot problem, as given in \eqref{sub-problem4a}. Therefore, as per time slot $\tau = mT_\Delta$, the VNFs can be installed at the VMs based on \eqref{e_kn}, as summarized in Algorithm~1.

\subsection{Optimality loss of VNF placement}
We next analyze the optimality loss due to the approximation of \eqref{approximation}. We can prove that the optimality loss is bounded and does not compromise the asymptotic optimality of the proposed approach. To improve tractability, an upper bound of the optimality loss is evaluated in the case where all the variables $\{\mathbf{e}^t, \mathbf{p}^t,\mathbf{u}^t,\mathbf{v}^t\}$ are optimized under the assumption of the availability of the full knowledge on the future $T_\Delta$ time slots, rather than taking the approximation of \eqref{approximation}.

Based on \eqref{approximation}, we first establish the following lemma:
\begin{lemma}
At any slot $t$, the differences between the approximated and actual queue backlogs in \eqref{approximation} are bounded by
\begin{subequations}\label{bounded queue}
\begin{align}
\big|Q_{kn}^{i}(t)-\hat{Q}_{kn}^{i}(t)\big | &\leq T_\Delta \omega_Q,\label{bounded queue1}\\
\big|q_{kn}^{i}(t)-\hat{q}_{kn}^{i}(t) \big| &\leq T_\Delta \omega_q,\label{bounded queue2}
\end{align}
\end{subequations}
where the constants $\omega_Q = \max\{N^{\max}l^{\max}+p^{\max}, 2N^{\max}l^{\max}+R^{\max}\}$, and $\omega_q = \max\{N^{\max}l^{\max}, p^{\max}\}$.
\end{lemma}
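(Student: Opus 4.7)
The plan is to reduce the lemma to a per-slot bound on how much each queue can change, and then telescope over the window of length at most $T_\Delta$. By construction in \eqref{approximation}, we have $\hat{Q}_{kn}^{i}(t) = Q_{kn}^{i}(\tau)$ and $\hat{q}_{kn}^{i}(t) = q_{kn}^{i}(\tau)$ for every $t \in \{\tau,\dots,\tau+T_\Delta-1\}$, so the quantities to bound are simply $|Q_{kn}^{i}(t) - Q_{kn}^{i}(\tau)|$ and $|q_{kn}^{i}(t) - q_{kn}^{i}(\tau)|$. Applying the triangle inequality to the telescoping sum $Q_{kn}^{i}(t) - Q_{kn}^{i}(\tau) = \sum_{s=\tau}^{t-1}[Q_{kn}^{i}(s+1) - Q_{kn}^{i}(s)]$, it suffices to show that the per-slot increment is bounded in absolute value by $\omega_Q$, and analogously by $\omega_q$ for the companion queue.

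Next, I would derive the per-slot bound from the queue dynamics in \eqref{eq: Qt}. Write schematically $Q_{kn}^{i}(s+1) = \max\{Q_{kn}^{i}(s) - D(s),\,0\} + A(s)$ with arrival mass $A(s) := \sum_{a} u_{k,an}^{i}(s) + \sum_{c} v_{k,cn}^{i}(s) + R_{kn}^{i,s}$ and departure mass $D(s) := \sum_{b} u_{k,nb}^{i}(s) + p_{kn}^{i}(s)e_{kn}(s)$. A short case analysis on whether or not $Q_{kn}^{i}(s) \geq D(s)$ (so that the $\max$ is active) shows that $-D(s) \leq Q_{kn}^{i}(s+1) - Q_{kn}^{i}(s) \leq A(s)$, hence $|Q_{kn}^{i}(s+1) - Q_{kn}^{i}(s)| \leq \max\{A(s), D(s)\}$. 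The link-capacity constraint \eqref{eq: line rate} gives $\sum_{a} u_{k,an}^{i}(s) \leq N^{\max} l^{\max}$ and $\sum_{c} v_{k,cn}^{i}(s) \leq N^{\max} l^{\max}$ (since each of the at most $N^{\max}$ neighbors contributes at most $l^{\max}$), and $R_{kn}^{i,s} \leq R^{\max}$, yielding $A(s) \leq 2 N^{\max} l^{\max} + R^{\max}$. Similarly \eqref{eq: line rate} and \eqref{eq: service rate} give $D(s) \leq N^{\max} l^{\max} + p^{\max}$. Taking the maximum of these two bounds yields exactly the constant $\omega_Q$ asserted in the lemma.

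For the second queue, the dynamics in \eqref{eq: qt} have a single incoming term $p_{k'n}^{i}(s)e_{k'n}(s) \leq p^{\max}$ and a single outgoing term $\sum_{d} v_{k,nd}^{i}(s) \leq N^{\max} l^{\max}$. The same case-analysis on the $\max$ gives $|q_{kn}^{i}(s+1) - q_{kn}^{i}(s)| \leq \max\{p^{\max},\, N^{\max} l^{\max}\} = \omega_q$. Combining the two per-slot bounds with the telescoping argument yields $|Q_{kn}^{i}(t) - Q_{kn}^{i}(\tau)| \leq (t-\tau)\,\omega_Q \leq T_\Delta \omega_Q$ and $|q_{kn}^{i}(t) - q_{kn}^{i}(\tau)| \leq T_\Delta \omega_q$, which are precisely \eqref{bounded queue1} and \eqref{bounded queue2}.

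The only subtle step is the per-slot inequality in the presence of the $\max\{\cdot,0\}$ projection: a naive upper bound on $|Q(s+1) - Q(s)|$ by $A(s) + D(s)$ would give a weaker constant. The trick is to observe that, whether the projection is active or not, the queue cannot increase by more than the arrivals and cannot decrease by more than the departures, so the absolute change is bounded by the \emph{maximum}, not the sum, of $A(s)$ and $D(s)$. All remaining manipulations are routine bookkeeping on the feasibility constraints.
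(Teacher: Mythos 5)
Your proof is correct and follows essentially the same route as the paper: bound the per-slot queue change by $\omega_Q$ (resp.\ $\omega_q$), then telescope over the at most $T_\Delta$ slots since $\tau$ using the triangle inequality. In fact your case analysis on the $\max\{\cdot,0\}$ projection, showing the per-slot change is bounded by the maximum of the arrival and departure masses rather than their sum, supplies the detail the paper only asserts when it states $\big|Q_{kn}^{i}(t+1)-Q_{kn}^{i}(t)\big|\leq \omega_Q$.
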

\begin{proof}
For any two consecutive slots $t$ and $t+1$, the difference of the queue backlogs is bounded, i.e., $\big|Q_{kn}^{i}(t+1)-Q_{kn}^{i}(t)\big| \leq \omega_Q$, where $\omega_Q$ is the maximum difference between the departure and the arrival of $\mathbf{Q}(t)$, denoted by $\max\{N^{\max}l^{\max}+p^{\max}, 2N^{\max}l^{\max}+R^{\max}\}$. According to \eqref{approximation} and the inequality $|a+b|\leq |a|+|b|$, we have
$\big|Q_{kn}^{i}(t)-\hat{Q}_{kn}^{i}(t)\big|=\big|Q_{kn}^{i}(t)-Q_{kn}^{i}(\tau)\big|=\big|\sum_{t_0=\tau}^t [Q(t_0+1)-Q(t_0)]\big| \leq (t-\tau)\omega_Q \leq T_\Delta\omega_Q$, where $\tau=mT_\Delta$ and $t=\tau, \ldots, \tau+T_\Delta-1$. Therefore, \eqref{bounded queue1} is proved. Likewise, \eqref{bounded queue2} can be proved.
\end{proof}

Based on Lemma 1, we are ready to obtain the following theorem:
\begin{theorem}
The optimality loss of the solution for \eqref{sub-problem2} due to the approximation of the queue backlogs in \eqref{approximation} is bounded, i.e.,
\begin{align}\label{bounded theorem}
&\Big|\sum_{i,k,n,b\in \mathcal{N}}[f_1(\hat{\mathbf{e}}^t, \hat{\mathbf{p}}^t)+f_2(\hat{\mathbf{u}}^t)+f_3(\hat{\mathbf{v}}^t)]\notag\\
&-\sum_{i,k,n,b\in \mathcal{N}}[f_1(\mathbf{e}^t, \mathbf{p}^t)+f_2(\mathbf{u}^t)+f_3(\mathbf{v}^t)]\Big| \leq \epsilon C,
\end{align}
where $C:=\epsilon {T_\Delta}^2 N^2 |\mathcal{I}|K[(\frac{1}{2\alpha^{\max}}+\frac{1}{2\beta^{\max}})(\omega_Q+\omega_q)^2+\frac{2}{\beta^{\max}}\omega_Q^2]$; $\alpha^{\max}=\max_{n,t} \alpha_n^t$, and $\beta^{\max}=\max_{[a,b],t}\beta_{[a,b]}^t$.
\end{theorem}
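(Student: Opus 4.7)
The plan is to bound the optimality gap subproblem by subproblem, telescoping each around the perturbation $(Q,q)\mapsto(\hat Q,\hat q)$. For each of $f_1,f_2,f_3$, I would use Lemma~1 to control the queue perturbation and use the closed-form minimizers in \eqref{p_kn}--\eqref{eq.first order2} together with non-expansiveness of the projection onto the box constraints to control the resulting shift in the decision variables.

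Let $\mathbf{x}^\ast$ denote the minimizer of \eqref{sub-problem2} under the true queues $(Q,q)$ and $\hat{\mathbf{x}}$ the minimizer under the approximated queues $(\hat Q,\hat q)$. Since $\mathbf{x}^\ast$ minimizes the true objective, the quantity inside the absolute value in \eqref{bounded theorem} is non-negative, and I would telescope it as
\[
[F(\hat{\mathbf{x}};Q,q)-F(\hat{\mathbf{x}};\hat Q,\hat q)]+[F(\hat{\mathbf{x}};\hat Q,\hat q)-F(\mathbf{x}^\ast;\hat Q,\hat q)]+[F(\mathbf{x}^\ast;\hat Q,\hat q)-F(\mathbf{x}^\ast;Q,q)],
\]
where $F:=\sum(f_1+f_2+f_3)$ and the second argument indicates which queue vector parametrizes the linear coefficients. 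The middle bracket is $\le 0$ by optimality of $\hat{\mathbf{x}}$ for the approximated problem. Since each $f_j$ is affine in its queue-difference coefficient, the remaining two brackets combine into an inner product $\sum(\delta c)(x^\ast-\hat x)$ over the index set of \eqref{sub-problem2}.

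The two factors are then bounded separately. Lemma~1 gives $|\delta c|\le T_\Delta(\omega_Q+\omega_q)$ for the $f_1$ and $f_3$ coefficients and $|\delta c|\le 2T_\Delta\omega_Q$ for the $f_2$ coefficients, since each $\delta c$ is a sum of at most two queue perturbations of the form $Q-\hat Q$ or $q-\hat q$. For the decision shift, the unconstrained minimizers in \eqref{p_kn} and \eqref{eq.first order2} are linear in the coefficient with slope $\epsilon/(2\alpha_n^t)$ or $\epsilon/(2\beta_{[n,b]}^t)$; non-expansiveness of the projection onto $[0,p^{\max}]$ and $[0,l^{\max}]$ therefore yields $|p^\ast-\hat p|\le\tfrac{\epsilon}{2\alpha_n^t}|\delta c|$, and analogously for $u,v$. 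Multiplying the two factors gives per-subproblem gaps of order $\epsilon T_\Delta^2(\omega_Q+\omega_q)^2/\alpha^{\max}$, $\epsilon T_\Delta^2\omega_Q^2/\beta^{\max}$, and $\epsilon T_\Delta^2(\omega_Q+\omega_q)^2/\beta^{\max}$; summing over the $N^2|\mathcal{I}|K$ indices in \eqref{sub-problem2} reproduces the form of $C$.

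The main obstacle will be handling the binary placement variable $e_{kn}$: if the optimal VNF at some VM differs between $\hat{\mathbf{e}}$ and $\mathbf{e}^\ast$, the per-$k$ pairing underlying the quadratic-in-$T_\Delta$ bound breaks, and the naive estimate gives only a linear-in-$T_\Delta$ loss at that VM. My plan is to argue per-VM: after optimizing out $p$, the $e$-reduced objective is a pointwise minimum of box-constrained quadratics, still Lipschitz in the queue coefficient vector with the right order, and a short case split on whether the VM-optimal VNF agrees between $\hat{\mathbf{x}}$ and $\mathbf{x}^\ast$ preserves the stated $\epsilon T_\Delta^2$ scaling. Everything else is an aggregation of the per-subproblem bounds.
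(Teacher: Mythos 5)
Your proposal follows essentially the same route as the paper's Appendix C: telescope around the optimality of $\hat{\mathbf{x}}$ for the objective parameterized by the approximated queues $\hat{\mathbf{A}}$, so that the gap reduces to a product of the queue-coefficient perturbation (bounded by Lemma~1) and the induced decision shift (bounded via the closed-form clipped minimizers in \eqref{p_kn} and \eqref{eq.first order2}), then sum over the $N^2|\mathcal{I}|K$ index terms to obtain $C$. The one point where you are actually more careful than the paper is the binary placement variable: the paper's step bounding $|\hat{p}_{kn}^{i}(t)\hat{e}_{kn}(t)-p_{kn}^{i}(t)e_{kn}(t)|$ by $\tfrac{\epsilon}{2\alpha_n^t}$ times the coefficient perturbation tacitly assumes the installed VNF does not flip between the two solutions, whereas your per-VM case split on the $e$-reduced objective addresses exactly that gap.
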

\begin{proof}
See Appendix C.
\end{proof}
We can see from Theorem 2 that the optimality loss of the two-timescale control increases quadratically with the time interval of VNF placement, $T_\Delta$. This allows us to balance between the optimality loss and the cost of VNF placement. As dictated in Theorems 1 and 2, the total optimality loss of the two-timescale approach for problem \eqref{eq: objective} is upper bounded, as given by
\begin{align}
    \overline{\Phi(\hat{\mathbf{x}}^t)} \leq \Phi^* + \epsilon (B+C),
\end{align}
where $\overline{\Phi(\hat{\mathbf{x}}^t)}$ is the time-average cost under the two-timescale approach. In other words, the two-timescale placement and operation of NFV preserves the asymptotic optimality with approximated queue backlogs.
%

\subsection{Learn-and-adapt for placement acceleration}

While proved to preserve the asymptotic optimality, the larger
timescale can slow down the optimal placement of VNFs. We propose to speed the placement up through a learning and adaptation method~\cite{chen2017learn}. The Lagrange multipliers $\tilde{\lambda}_{kn,1}^i(t)$ and $\tilde{\lambda}_{kn,2}^i(t)$ play the key roles in the proposed distributed online optimization of NFV in \eqref{eq.stoc-subgrad}. We can incrementally learn these Lagrange multipliers from observed data and speed up the convergence of the multipliers driven by the learning process.

\begin{algorithm}[t]
\caption{Distributed Learn-and-Adapt NFV Optimization}\label{algo1}
\begin{algorithmic}[1]
\For {$t=1,2\dots$}
\State \textbf{Online processing and routing (1st gradient):}
\State Construct the effective dual variable via \eqref{eq.dual-effect}, observe the current state $\bm{s}^t$, and obtain placement, processing and routing decisions $\mathbf{x}^t(\bm{\gamma}^t)$ by minimizing online Lagrangian \eqref{eq.real-time1}.
\State Update the instantaneous queue length $\textbf{Q}(t+1)$ and $\mathbf{q}(t+1)$ with $\mathbf{x}^t(\bm{\gamma}^t)$ via queue dynamics \eqref{eq: Qt} and \eqref{eq: qt}.
\State \textbf{Statistical learning (2nd gradient):} \State Obtain
variable $\mathbf{x}^t(\hat{\bm{\lambda}}^t)$ by solving online
Lagrangian minimization with sample $\bm{s}^t$ via \eqref{eq.real-time2}.
\State Update the empirical dual variable $\hat{\bm{\lambda}}^{t+1}$ via \eqref{eq.dual-lambda-alg}.
\EndFor
\end{algorithmic}
\end{algorithm}


In the proposed learn-and-adapt scheme, with the online learning of $\tilde{\lambda}_{kn,1}^i(t)$ and $\tilde{\lambda}_{kn,2}^i(t), \forall n,i$ at each slot $t$, two stochastic gradients are updated using the current $\bm{s}^t$. 
The first gradient $\bm{\gamma}^{t}$ is designed to minimize the instantaneous Lagrangian for optimal decision makings on processing or routing network services, as given by [cf. \eqref{eq.sub}]
\begin{equation}\label{eq.real-time1}
   \mathbf{x}^t(\bm{\gamma}^t)=\arg\min_{\mathbf{x}^t \in \mathcal{F}^t}{    L}^t(\mathbf{x}^t,\bm{\gamma}^t)
\end{equation}
which depends on what we term \textit{effective} multiplier $\bm{\gamma}^t:=\{{\gamma}_{kn,1}^{i}(t),{\gamma}_{kn,2}^{i}(t), \forall n,i\}$, as given by
\begin{equation}\label{eq.dual-effect}
    \!\underbrace{~~~~\bm{\gamma}^t~~~~}_{\rm effective~multiplier}=\underbrace{~~~~\hat{\bm{\lambda}}^t~~~~}_{\rm statistical~learning}+~~\underbrace{~~~\epsilon \mathbf{A}(t)~-~\bm{\theta}~~~}_{\rm online~adaptation},
\end{equation}
where $\hat{\bm{\lambda}}^t:=\{\hat{\lambda}_{kn,1}^i(t), \hat{\lambda}_{kn,2}^i(t), \forall i,k,n\}$ is the empirical dual variable, and $\bm{\theta}$ controls the bias of $\bm{\gamma}^t$ in the steady state, and can be judiciously designed to achieve the improved cost-delay tradeoff, as will be shown in Theorem 3.

For a better illustration of the effective multiplier in \eqref{eq.dual-effect}, we call $\hat{\bm{\lambda}}(t)$ the statistically learnt dual variable to obtain the exact optimal argument of the dual problem $\max_{\bm{\lambda}\succeq 0}\; D(\bm{\lambda})$. We call $\epsilon\mathbf{A}(t)$ (which is exactly $\bm{\lambda}$ as shown in \eqref{eq.stoc-subgrad}) the online adaptation term, since it can track the instantaneous change of system statistics. The control variable $\epsilon$ tunes the weights of these two factors.

The second gradient is designed to simply learn the stochastic gradient of \eqref{eq.dual func} at the previous empirical dual variable $\hat{\bm{\lambda}}^t$, and implement a gradient ascent update as
 \begin{align}\label{eq.dual-lambda-alg}
\hat{\lambda}_{kn,1}^i(t+1) &= \hat{\lambda}_{kn,1}^i(t)+\eta(t) [\sum_{a \in \mathcal{N}} u_{k,an}^{i}(\hat{\lambda}_{kn,1}^i(t))+R_{kn}^{i,t} \notag\\
&+\sum_{c \in \mathcal{N}} v_{k,cn}^{i}(\hat{\lambda}_{kn,1}^i(t))
-\sum_{b\in \mathcal{N}} u_{k,nb}^{i}(\hat{\lambda}_{kn,1}^i(t))\notag\\
&-p_{kn}^{i}(\hat{\lambda}_{kn,1}^i(t))e_{kn}(\hat{\lambda}_{kn,1}^i(t))]^+ \notag\\
\hat{\lambda}_{kn,2}^i(t+1) &= \hat{\lambda}_{kn,2}^i(t)+\eta(t) [p_{k'n}^{i}(\hat{\lambda}_{k'n,2}^i(t))e_{k'n}(\hat{\lambda}_{k'n,1}^i(t))\notag\\
&-\sum_{d\in \mathcal{N}}  v_{k,nd}^{i}(\hat{\lambda}_{kn,2}^i(t))]^+
\end{align}
where $\eta(t)$ is a proper diminishing stepsize, and the ``virtual'' allocation $\mathbf{x}^t(\hat{\bm{\lambda}}^t)$ can be found by solving
\begin{equation}\label{eq.real-time2}
	\mathbf{x}^t(\hat{\bm{\lambda}}^t) = \arg\min_{\mathbf{x}^t \in \mathcal{F}^t}{
L}^t(\mathbf{x}^t,\hat{\bm{\lambda}}^t).
\end{equation}

With learn-and-adaption incorporated, Algorithm 2 takes an additional learning step in Algorithm 1, i.e., \eqref{eq.dual-lambda-alg}, which adopts gradient ascent with diminishing stepsize $\eta(t)$ to find the ``best empirical'' dual variable from all observed network states. In the transient stage, the extra gradient evaluations and empirical dual variables accelerate the convergence speed of Algorithm 1; while in the steady stage, the empirical dual variable approaches the optimal multiplier, which significantly reduces the steady-state queue lengths.



%

Using the learn-and-adapt approach, we are ready to arrive at the following theorem \cite[Theorems 2 and 3]{chen2017learn}.
\begin{theorem}\label{the.queue-stable}
Suppose that the assumptions in Theorem~1 are satisfied. Then with $\bm{\gamma}^t$ defined in \eqref{eq.dual-effect} and $\bm{\theta}={\cal
O}(\sqrt{\epsilon}\log^2(\epsilon))$, Algorithm 2 yields a near-optimal solution for \eqref{eq: objective} in the sense that
\begin{equation}\label{eq.opt-gap}
        {\Phi}^*\leq\lim_{T\rightarrow \infty} \frac{1}{T} \sum_{t=1}^{T} \mathbb{E}\left[\Phi^t\left(\mathbf{x}^t(\bm{\gamma}^t)\right)\right] \leq {\Phi}^*+{\cal O}(\epsilon).\;
\end{equation}
The long-term average expected queue length satisfies
\begin{equation}\label{eq.stt-length}
    \lim_{T\rightarrow \infty}\frac{1}{T}\sum_{t=1}^T \sum_{i,k,n} \mathbb{E}[ Q_{kn}^{i}(t)+q_{kn}^{i}(t)]={\cal O}\left(\frac{\log^2(\epsilon)}{\sqrt{\epsilon}}\right),~
\end{equation}
where $\mathbf{x}^t(\bm{\gamma}^t)$ denotes the real-time operations obtained from the Lagrangian minimization
\eqref{eq.real-time1}.
\end{theorem}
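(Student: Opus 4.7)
The plan is to adapt the drift-plus-penalty machinery already used for Theorem 1 to the learn-and-adapt scheme, following the template of \cite{chen2017learn}. The key insight to exploit is the decomposition in \eqref{eq.dual-effect}: the effective multiplier $\bm{\gamma}^t$ splits into a statistically learnt part $\hat{\bm{\lambda}}^t$, which should be close to the dual optimum $\bm{\lambda}^*$, and a shifted online-adaptation part $\epsilon\mathbf{A}(t)-\bm{\theta}$, which responds to instantaneous state fluctuations. The purpose of the shift $\bm{\theta}$ is to displace the typical operating point of the virtual queues, so that steady-state backlogs cluster around $\bm{\theta}/\epsilon$ rather than $\bm{\lambda}^*/\epsilon$; tuning $\bm{\theta}$ to vanish slowly as $\epsilon\to 0$ is what buys the improved tradeoff.

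First I would establish a convergence rate for the empirical multiplier. Because \eqref{eq.dual-lambda-alg} is projected stochastic gradient ascent on the concave dual $D(\bm{\lambda})$ with diminishing stepsize $\eta(t)$, and because the Slater-type assumption from Theorem 1 gives local strong concavity of $D$ near $\bm{\lambda}^*$, standard stochastic-approximation arguments yield $\mathbb{E}\|\hat{\bm{\lambda}}^t-\bm{\lambda}^*\|^2 = \mathcal{O}(\log^2(t)/t)$ for $\eta(t)=\Theta(1/t)$. This rate is what ultimately produces the $\log^2(\epsilon)$ factor in the final queue bound.

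Next I would introduce a \emph{shifted} quadratic Lyapunov function
\begin{equation}
\mathcal{L}(t) \;=\; \tfrac{1}{2}\bigl\|\epsilon\mathbf{A}(t)-\bm{\theta}\bigr\|^{2},
\end{equation}
and compute its one-slot drift $\Delta\mathcal{L}(t):=\mathbb{E}[\mathcal{L}(t+1)-\mathcal{L}(t)\mid \mathbf{A}(t),\hat{\bm{\lambda}}^t]$ using the queue dynamics \eqref{eq: Qt}--\eqref{eq: qt}. Expanding the square and invoking the minimizing property of $\mathbf{x}^t(\bm{\gamma}^t)$ in \eqref{eq.real-time1} against any feasible stationary policy (in particular, an $\zeta$-slack one, and a near-optimal one that attains $\Phi^*$), I would obtain a drift-plus-penalty inequality of the form
\begin{equation}
\Delta\mathcal{L}(t)+\epsilon\,\mathbb{E}[\Phi^t(\mathbf{x}^t(\bm{\gamma}^t))] \;\leq\; \epsilon\Phi^{*}+\epsilon^{2}B \;+\; \bigl\langle \hat{\bm{\lambda}}^t-\bm{\lambda}^{*},\,\mathbb{E}[\bm{g}(t)]\bigr\rangle \;-\; \zeta\,\bigl\|\epsilon\mathbf{A}(t)-\bm{\theta}\bigr\|,
\end{equation}
where the inner-product cross term encodes the residual learning error, and the last term is the negative drift available from the slack $\zeta$.

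The cost bound \eqref{eq.opt-gap} then follows by telescoping over $t=1,\ldots,T$, dividing by $\epsilon T$, and using Cauchy--Schwarz together with $\mathbb{E}\|\hat{\bm{\lambda}}^t-\bm{\lambda}^*\|^2 = \mathcal{O}(\log^2(t)/t)$ to show that the time-averaged cross term vanishes, leaving an $\mathcal{O}(\epsilon)$ gap. For the queue bound \eqref{eq.stt-length}, the negative drift term forces $\mathbb{E}\|\epsilon\mathbf{A}(t)-\bm{\theta}\|$ to remain bounded in steady state; combined with $\|\epsilon\mathbf{A}(t)\|\le \|\bm{\theta}\|+\|\epsilon\mathbf{A}(t)-\bm{\theta}\|$ and the choice $\bm{\theta}=\mathcal{O}(\sqrt{\epsilon}\log^2(\epsilon))$, dividing by $\epsilon$ yields the stated $\mathcal{O}(\log^2(\epsilon)/\sqrt{\epsilon})$ queue-length bound.

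The main obstacle I anticipate is the coupled control of the cross term $\langle \hat{\bm{\lambda}}^t-\bm{\lambda}^{*},\mathbb{E}[\bm{g}(t)]\rangle$, because the stochastic gradient $\bm{g}(t)$ depends on the decision $\mathbf{x}^t(\bm{\gamma}^t)$, which in turn depends on both $\hat{\bm{\lambda}}^t$ and $\mathbf{A}(t)$. Resolving this requires simultaneously tracking the two stochastic processes — the empirical multiplier (driven by diminishing-step gradient ascent) and the shifted backlog (driven by constant-step online adaptation) — and calibrating $\eta(t)$, $\epsilon$, and $\bm{\theta}$ so that the learning error decays fast enough to be absorbed by the $\zeta$-slack drift. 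This is exactly the balancing act carried out in \cite[Thms.~2--3]{chen2017learn}, and once transplanted to the NFV constraint structure of \eqref{10d}--\eqref{10e}, it delivers the claimed $[\epsilon,\log^2(\epsilon)/\sqrt{\epsilon}]$ tradeoff.
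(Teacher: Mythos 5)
The paper does not actually prove this theorem---it defers entirely to \cite[Theorems 2 and 3]{chen2017learn}---and your sketch is a faithful reconstruction of exactly that cited argument (convergence-rate lemma for the learnt multiplier $\hat{\bm{\lambda}}^t$, drift of the shifted Lyapunov function around $\bm{\theta}$, telescoping plus the $\zeta$-slack policy), so you are on the same route the authors rely on. The one step to tighten is the final queue bound: ``$\mathbb{E}\|\epsilon\mathbf{A}(t)-\bm{\theta}\|$ remains bounded'' is not enough (a constant bound divided by $\epsilon$ only recovers ${\cal O}(1/\epsilon)$); you need the drift inequality to force the time-average of $\|\epsilon\mathbf{A}(t)-\bm{\theta}\|$ down to ${\cal O}(\epsilon/\zeta)$ so that the backlog concentrates at $\bm{\theta}/\epsilon={\cal O}(\log^2(\epsilon)/\sqrt{\epsilon})$, which is precisely the calibration of $\eta(t)$, $\epsilon$, and $\bm{\theta}$ carried out in the cited reference.
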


Theorem \ref{the.queue-stable} asserts that by setting $\bm{\theta}={\cal O}(\sqrt{\epsilon}\log^2(\epsilon))$,
Algorithm 2 is asymptotically ${\cal O}(\epsilon)$-optimal with an average queue
length ${\cal O}(\log^2(\epsilon)/{\sqrt{\epsilon}})$. This implies
that the algorithm is able to achieve a near-optimal cost-delay tradeoff $[\epsilon,\log^2(\epsilon)/{\sqrt{\epsilon}}]$; see \cite{chen2017learn}.
Comparing with the standard tradeoff
$[\epsilon,{1}/{\epsilon}]$ under Algorithm 1, the learn-and-adapt design of
Algorithm 2 remarkably improves the delay performance.




%

\section{Numerical Tests}

\begin{figure}[t]

\centering
\includegraphics[height=0.36\textwidth]{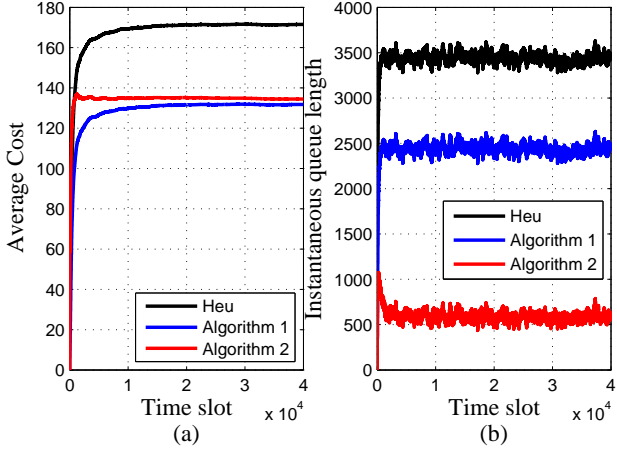}

\caption{Comparison of time-average costs and instantaneous queue lengths, where $N=7$, $\epsilon=0.1$ and average arrival rate is 14 services/sec.}

\label{fig: cost-t}
\end{figure}

\begin{figure}[t]
\centering
\includegraphics[height=0.36\textwidth]{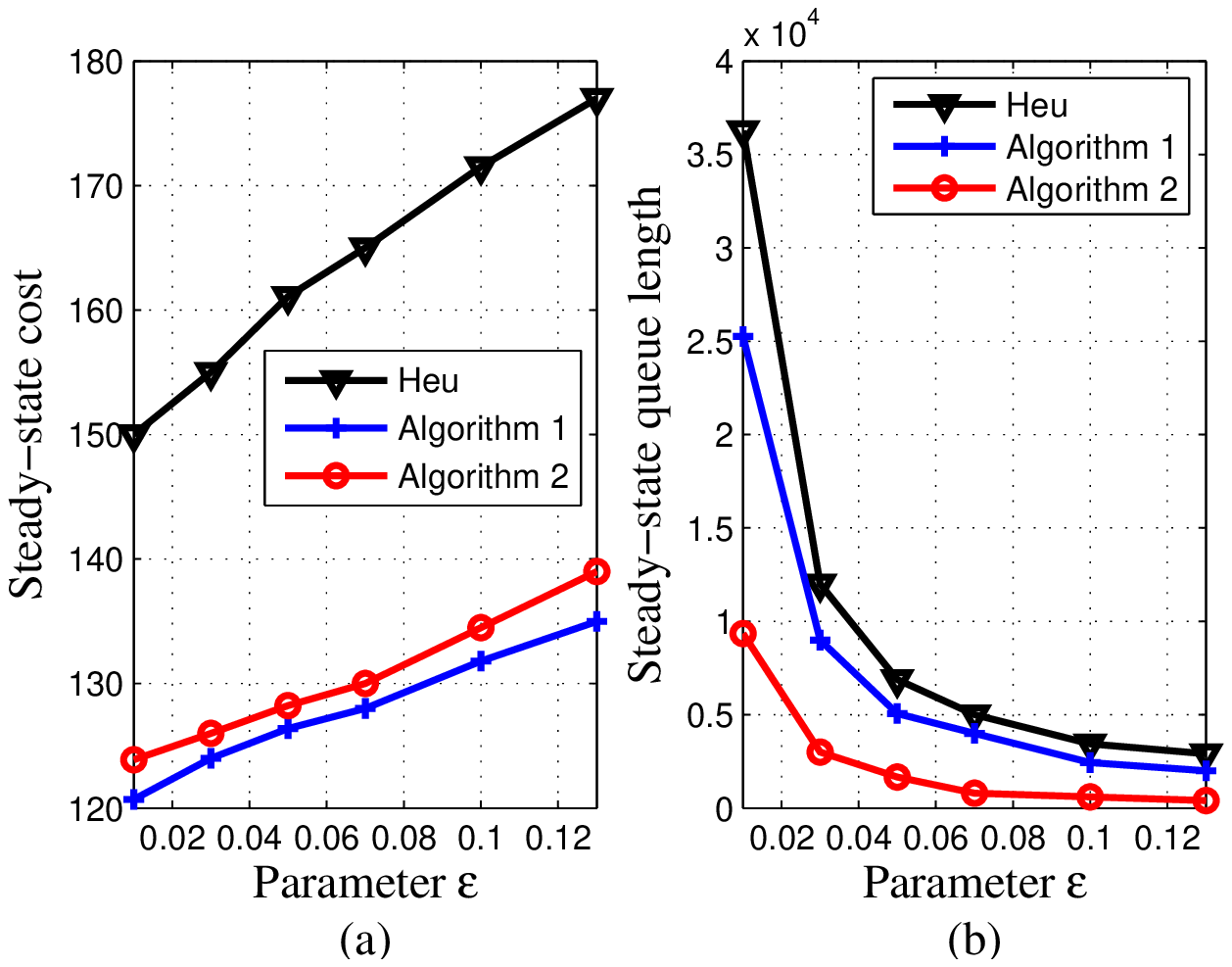}

\caption{Comparison of steady-state costs and queue lengths under different $\epsilon$, where $N=7$ and average arrival rate is 14 services/sec.}

\label{fig: cost-mu}
\end{figure}

\begin{figure}[t]
\centering
\includegraphics[height=0.36\textwidth]{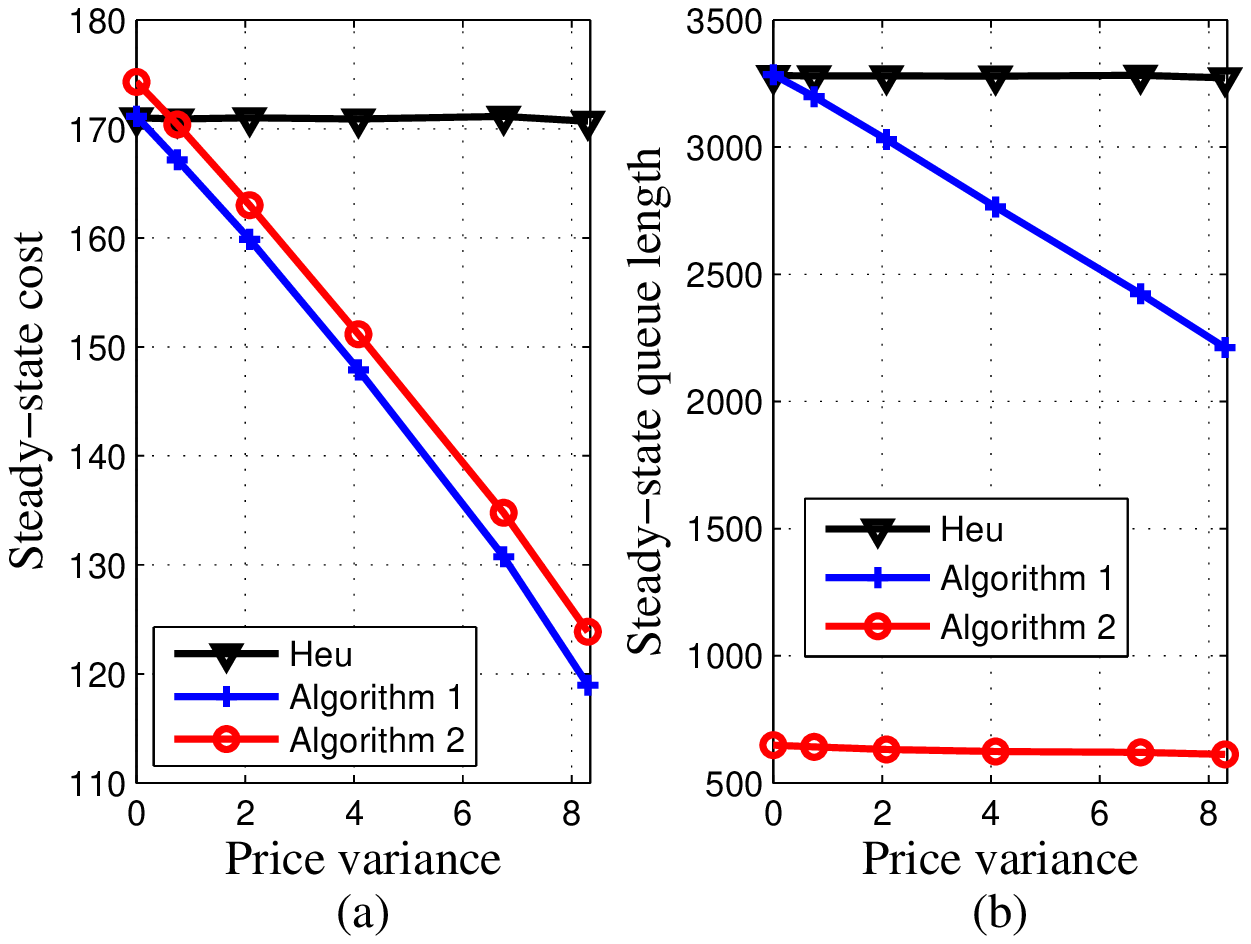}

\caption{Comparison of steady-state costs and queue lengths under different price variances, where $N=7$, $\epsilon=0.1$ and average arrival rate is 14 services/sec.}

\label{fig: cost-queue-pricenew}
\end{figure}

\begin{figure}[t]
\centering
\includegraphics[height=0.35\textwidth]{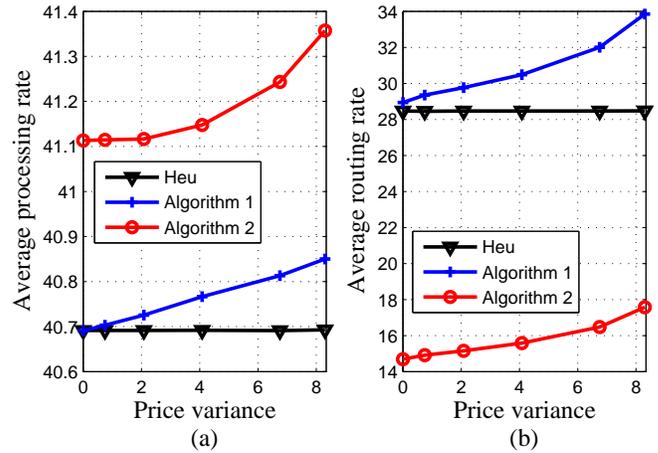}

\caption{Comparison of average processing and routing rates for all network services on all VMs under different price variances, where $N=7$, $\epsilon=0.1$ and average arrival rate is 14 services/sec.}

\label{fig: rate-price}
\end{figure}

\begin{figure}[t]
\centering
\includegraphics[height=0.4\textwidth]{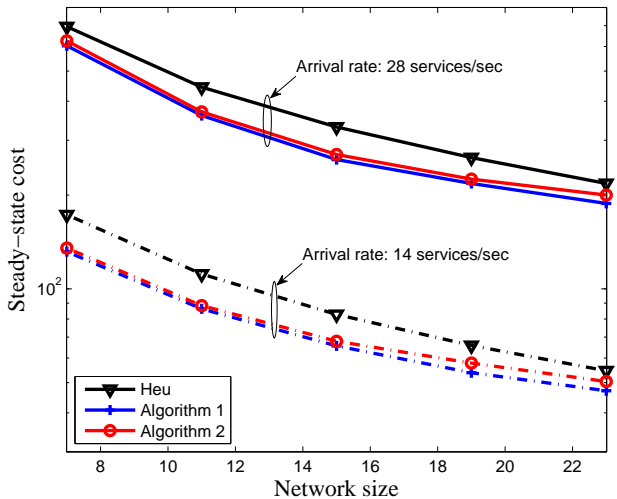}

\caption{Comparison of steady-state costs under different network sizes, where $\epsilon=0.1$.}

\label{fig: cost-size}
\end{figure}

\begin{figure}[t]
\centering
\includegraphics[height=0.4\textwidth]{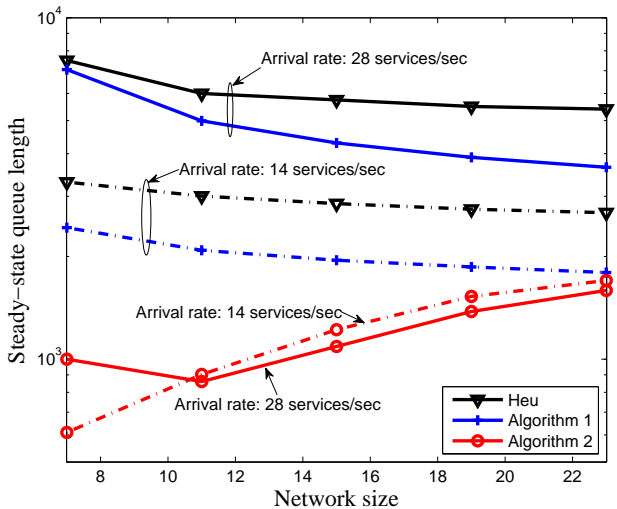}

\caption{Comparison of steady-state queue lengths under different network sizes, where $\epsilon=0.1$.}

\label{fig: queue-size}
\end{figure}

Numerical tests are provided to validate our analytical claims and demonstrate the merits of the proposed algorithms. Two types of network services are considered on the platform with $N=7$ VMs. The first type of network  service is $\{f_1, f_2, f_3\}$ and the second type of network service is $\{f_3, f_1, f_2\}$.
Suppose that each service has a size of 1 KB.
The processing and routing prices $\alpha_n^t$ and $\beta_{[a,b]}^t$ are uniformly distributed over $[0.1,1]$ by default; $p_{n}^{\max}$ and $l_{ab}^{\max}$ are generated from a uniform distribution within $[10,20]$. The default arrival rate of network services is uniformly distributed with a mean of 14 services/sec.
The stepsize is $\eta(t)=1/\sqrt{t},\forall t$, the tradeoff variable is $\epsilon=0.1$, and the bias correction vector is chosen as $\bm{\theta}=2\sqrt{\epsilon}\log^2(\epsilon)$.
Algorithms are evaluated in a two-timescale scenario, where the placement of VNFs is carried out every $T_\Delta=5$ sec. In addition to the proposed Algorithms~1 and 2, we also simulate a heuristic algorithm (Heu) as the benchmark, which decides the placement of VNFs and processing/routing rates similarly as Algorithm~1, but with the prices in \eqref{p_kn} and \eqref{eq.first order2} replaced by their means. Therefore, the decisions are made only based on queue differences, with no price considerations.

Fig.~\ref{fig: cost-t} compares the three algorithms in terms of the time-average cost and the instantaneous queue length. It can be seen from Fig.~\ref{fig: cost-t}(a) that the time-average cost of Algorithm 2 converges slightly higher than that of Algorithm 1, while the time-average cost of Heu is about 30\% larger.
Furthermore, Algorithm 2 exhibits faster convergence than Algorithm 1 and Heu, as its time-average cost quickly reaches the optimal steady-state value by leveraging the learning process.
Fig.~\ref{fig: cost-t}(b) shows that Algorithm 2 incurs the shortest queue lengths among the three algorithms, followed by Algorithm 1. Particularly, the aggregated instantaneous queue length of Algorithm 2 is about 76\% and 83\% smaller than those of Algorithm 1 and Heu, respectively.
Clearly, the learn-and-adapt procedure reduces delay without markedly compromising the time-average cost.

Fig.~\ref{fig: cost-mu} compares the steady-state cost and queue length of the three algorithms, under different stepsize (tradeoff coefficient) $\epsilon$.
It is observed that as $\epsilon$ grows, the steady-state costs of all three algorithms increase and the steady-state queue lengths declines. This validates our findings in Theorems 1 and 3. 

The steady-state cost and queue length are also compared under different price variances in Figs.~\ref{fig: cost-queue-pricenew}(a) and (b). Here, processing and routing prices are generated with the mean of 0.55 and variance from $3.3\times 10^{-5}$ to $8.3\times 10^{-2}$. The costs and queue lengths of Algorithms 1 and 2 decrease as the price variance increases, while those of Heu remain unchanged. This is because Heu adopts price-independent processing and routing rates, while Algorithms 1 and 2 are able to minimize the cost by taking advantage of price differences among VMs and links. 
As further shown in Figs.~\ref{fig: rate-price}(a) and (b), the average processing and routing rates of Algorithms 1 and 2 rise with the growth of price variance, since the algorithms either choose a lower priced link with a higher routing rate, or a lower priced VM with a higher processing rate.

An interesting finding is that the average backlog of Algorithm 2 is insusceptible to price variances; see Fig.~\ref{fig: cost-queue-pricenew}(b). This is due to the fact that the algorithm, aiming to reduce the backlog of unfinished network services, achieves the aim by avoiding routing network services within the same type of VMs. This can also be evident from Figs.~\ref{fig: rate-price}(a) and (b), where VNFs are processed typically at the first encountered corresponding VMs, even at higher processing rates, hence reducing routing rates.

Fig.~\ref{fig: cost-size} plots the steady-state costs of Algorithms 1 and 2, and Heu, as the network size $N$ (i.e., the number of VMs) increases. It can be observed in Fig.~\ref{fig: cost-size} that under the same arrival rate of services, the costs decline as the network becomes large. This is due to the increased connectivity of each VM, which helps increase the diversity of choosing cost-effective routing links and neighboring VMs with low prices and, in turn, reduce the costs. The costs increase as the average arrival rate of services increases, since more resources are required to accommodate the increased arrivals.

In Fig.~\ref{fig: queue-size}, we plot the steady-state queue lengths of Algorithms~1 and 2, and Heu, as the network size grows. We can see that Algorithms~1 and 2 are able to reduce the queue length of the network, as compared to Heu. The reduction of Algorithm~1 is increasingly large, especially when the arrival rate of services is large. In addition, the queue length of Algorithm~2 under the arrival rate of 28 services/sec can become lower than that under the arrival rate of 14 services/sec, as the network becomes large. We can conclude that the gain of Algorithm~2 diminishes, as the network size grows with a relatively light arrival rate of services. Nevertheless, Algorithm~2 is more efficient under heavier traffic arrivals in a large network.

\section{Conclusions}

In this paper, a new distributed online optimization was developed to minimize the time-average cost of NFV, while stabilizing the function queues of VMs. Asymptotically optimal decisions on the placement of VNFs, and the processing and routing of network services were instantly generated at individual VMs, adapting to the topology and stochasticity of
the network. A learn-and-adapt approach was further proposed to speed up stabilizing the VMs and achieve a cost-delay tradeoff $[\epsilon,\log^2(\epsilon)/{\sqrt{\epsilon}}]$. Numerical results show that the proposed method is able to reduce the time-average cost of NFV by 30\% and reduce the queue length by 83\%.

\appendix
\subsection{Proof of (18a) in Theorem 1}
\begin{proof}
From the recursions in \eqref{eq: Qt}, we have
\begin{align}
  & (Q_{kn}^i(t+1))^2
    =[Q_{kn}^i(t) + \sum_{a \in \mathcal{N}} u_{k,an}^{i}(t)+\sum_{c \in \mathcal{N}} v_{k,cn}^{i}(t)\notag\\
    &+R_{kn}^{i,t}-\sum_{b\in \mathcal{N}} u_{k,nb}^{i}(t)-p_{kn}^{i}(t)e_{kn}(t)]^2 \notag\\
    & \leq (Q_{kn}^i(t))^2 + 2Q_{kn}^i(t)[\sum_{a \in \mathcal{N}} u_{k,an}^{i}(t)+\sum_{c \in \mathcal{N}} v_{k,cn}^{i}(t)\notag\\&+R_{kn}^{i,t}-\sum_{b\in \mathcal{N}} u_{k,nb}^{i}(t) -p_{kn}^{i}(t)e_{kn}(t)]\notag\\&+ \underbrace{8(N^{\max} l^{\max})^2+3{R^{\max}}^2+2{p^{\max}}^2}_{2B_1} \notag
\end{align}
where $N^{\max}$ is the maximum degree of VMs, $l^{\max}=\max_{[a,b]} l_{ab}^{\max}$ and $p^{\max}=\max_{n} p_{n}^{\max}$.
Similarly, we also have
\begin{align}
  & (q_{kn}^i(t+1))^2 =[q_{kn}^i(t) + p_{k'n}^{i}(t)e_{k'n}(t)-\sum_{d\in \mathcal{N}}  v_{k,nd}^{i}(t)]^2 \notag\\
    & \leq (q_{kn}^i(t))^2 + 2q_{kn}^i(t)[p_{k'n}^{i}(t)e_{k'n}(t)-\sum_{d\in \mathcal{N}} \notag v_{k,nd}^{i}(t)]\\&+ \underbrace{(N^{\max} {l^{\max}})^2+{p^{\max}}^2}_{2B_2}. \notag
\end{align}

Considering now the Lyapunov function $\Upsilon(t):=\frac{1}{2} [\sum_{i,k,n} (Q_{kn}^i(t))^2+\sum_{i,k,n} (q_{kn}^i(t))^2]$, it readily follows that
\begin{align}
    & \triangle \Upsilon(t):=\Upsilon(t+1)-\Upsilon(t) \notag \\
    & \leq \sum_{i,k,n} \{Q_{kn}^i(t)[\sum_{a \in \mathcal{N}} u_{k,an}^{i}(t)+\sum_{c \in \mathcal{N}} v_{k,cn}^{i}(t)+R_{kn}^{i,t}  \notag\\
    &-\sum_{b\in \mathcal{N}} u_{k,nb}^{i}(t)-p_{kn}^{i}(t)e_{kn}(t)]\}\notag \\
    &+\sum_{i,k,n} \{q_{kn}^i(t)[p_{k'n}^{i}(t)e_{k'n}(t)-\sum_{d\in \mathcal{N}} v_{k,nd}^{i}(t)]\} + B, \notag
\end{align}where $B := B_1+B_2$ is a constant.
Taking expectations and adding $\frac{1}{\epsilon}\mathbb{E}[\Phi^t(\mathbf{x}^t)]$ ($\mathbf{x}^t$ is the optimal policy by solving \eqref{eq.sub}) to both sides, we arrive at
\begin{align}
    & \mathbb{E}[\triangle \Upsilon(t)] + \frac{1}{\epsilon}\mathbb{E}[\Phi^t(\mathbf{x}^t)] \notag \\
    & \leq B + \frac{1}{\epsilon} \mathbb{E}\Bigl(\Phi^t(\mathbf{x}^t) + \epsilon \sum_{n,i}[ Q_{kn}^{i}(t) ( \sum_{a \in \mathcal{N}} u_{k,an}^{i}(t))\notag  \\
    &+\sum_{c \in \mathcal{N}} v_{k,cn}^{i}(t)
    +R_{kn}^{i,t}-\sum_{b\in \mathcal{N}} u_{k,nb}^{i}(t)-p_{kn}^{i}(t))e_{kn}(t)] \notag \\
    &+\epsilon \sum_{n,i}[ q_{kn}^{i}(t)(p_{k'n}^{i}(t)e_{k'n}(t)-\sum_{d\in \mathcal{N}} v_{k,nd}^{i}(t))] \Bigr) \notag \\
    & = B + \frac{1}{\epsilon} L(\mathcal{X}(\epsilon \mathbf{A}(t)), \epsilon \mathbf{A}(t)) \notag \\
    & =  B + \frac{1}{\epsilon} D(\epsilon \mathbf{A}(t))  \leq B + \frac{1}{\epsilon} \tilde{\Phi}^*\notag
\end{align}where we use the definition of $L(\mathcal{X}, \bm{\lambda})$ in (\ref{eq.Lam}); $\mathcal{X}(\epsilon \mathbf{A}(t))$ denotes the optimal primal variable set given by~\eqref{eq.sub} for $\bm{\lambda} = \epsilon \mathbf{A}(t)$ (hence, $L(\mathcal{X}(\epsilon \mathbf{A}(t)), \epsilon \mathbf{A}(t)) = D(\epsilon \mathbf{A}(t))$); $\tilde{\Phi}^*$ denotes the optimal value for problem \eqref{eq: objective-relax}; and the last inequality is due to the weak duality: $D(\bm{\lambda}) \leq \tilde{\Phi}^*$, $\forall \bm{\lambda}$.

Summing over all $t$, we then have
\begin{align}
    & \sum_{t=0}^{T-1} \mathbb{E}[\triangle \Upsilon(t)] + \frac{1}{\epsilon} \sum_{t=0}^{T-1} \mathbb{E}[\Phi^t(\mathbf{x}^t)] \notag \\
    & = \mathbb{E}[\Upsilon(T)]-\Upsilon(0) + \frac{1}{\epsilon} \sum_{t=0}^{T-1} \mathbb{E}[\Phi^t(\mathbf{x}^t)]  \leq T (B + \frac{1}{\epsilon} \tilde{\Phi}^*) \notag
\end{align}which leads to
\begin{align}
    \frac{1}{T} \sum_{t=0}^{T-1} \mathbb{E}[\Phi^t(\mathbf{x}^t)] & \leq \tilde{\Phi}^* + \epsilon (B + \frac{\Upsilon(0)}{T}) \leq \Phi^* + \epsilon (B + \frac{\Upsilon(0)}{T}). \notag
\end{align}(18a) follows by taking $T \rightarrow \infty$.
\end{proof}

\subsection{Proof of (18b) in Theorem 1}

Assume that there exists a stationary policy $\mathcal{X}$, under which $\mathbb{E} [\sum_{a \in \mathcal{N}} u_{k,an}^{i}(t)+\sum_{c \in \mathcal{N}} v_{k,cn}^{i}(t)+R_{kn}^{i,t}-\sum_{b\in \mathcal{N}} u_{k,nb}^{i}(t)-p_{kn}^{i}(t)e_{kn}(t)]\leq -\zeta$, and $\mathbb{E}[ p_{k'n}^{i}(t)e_{k'n}(t)-\sum_{d\in \mathcal{N}} v_{k,nd}^{i}(t)]\leq -\zeta$, where $\zeta>0$ is a slack vector constant, we have the following lemma. 

\begin{lemma}
If the random state $\bm{s}^t$ is i.i.d., there exists a \textit{stationary} control policy $\mathcal{P}^{\text{stat}}$, which is a pure (possibly randomized) function of the realization of $\bm{s}^t$, satisfying \eqref{eq: line rate} and \eqref{eq: service rate}, and providing the following guarantees per $t$:
\begin{align}\label{stationary_policy}
&\mathbb{E}[\Phi^{\text{stat}}(\mathbf{x}^t)]=\tilde{\Phi}^*, \notag\\
&\mathbb{E} [\sum_{a \in \mathcal{N}} u_{k,an}^{i,\text{stat}}(t)+\sum_{c \in \mathcal{N}} v_{k,cn}^{i,\text{stat}}(t)+R_{kn}^{i,t} \notag\\
&-\sum_{b\in \mathcal{N}} u_{k,nb}^{i,\text{stat}}(t)-p_{kn}^{i,\text{stat}}(t)e_{kn}^{\text{stat}}(t)]\leq -\zeta, \notag\\
&\mathbb{E}[p_{k'n}^{i,\text{stat}}(t)e_{k'n}^{\text{stat}}(t)-\sum_{d\in \mathcal{N}} v_{k,nd}^{i,\text{stat}}(t)]\leq -\zeta,
\end{align}where $\Phi^{\text{stat}}(\mathbf{x}^t)$ denotes the resultant cost, $\{e_{kn}^{\text{stat}}(t),p_{kn}^{i,\text{stat}}(t),u_{k,ab}^{i,\text{stat}}(t), v_{k,ab}^{i,\text{stat}}(t), \forall [a,b], i,k,n \}$ denote the routing and processing rates under policy $\mathcal{P}^{\text{stat}}$, and expectations are taken over the randomization of $\bm{s}^t$ and (possibly) $\mathcal{P}^{\text{stat}}$.

\end{lemma}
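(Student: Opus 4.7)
The plan is to prove the lemma by invoking the classical characterization of the achievable performance region of an i.i.d.\ stochastic system in terms of stationary randomized policies, along the lines of \cite{neely2010stochastic}. First I would observe that because $\bm{s}^t$ is i.i.d.\ across slots, the per-slot constraints \eqref{eq: line rate}--\eqref{eq: service rate} depend only on the current realization of $\bm{s}^t$, and the long-term time averages in the relaxed problem \eqref{eq: objective-relax} reduce to single-slot expectations over the state distribution. Hence any feasible policy---possibly randomized and history-dependent---produces a vector of long-run averages that can be realized in expectation by a policy depending on $\bm{s}^t$ alone, which is the key equivalence I would exploit.

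Formally, I would define, for each realization $\bm{s}$, the compact set $\mathcal{F}(\bm{s})$ of per-slot decisions satisfying \eqref{eq: line rate}--\eqref{eq: service rate}, together with the per-state performance vector $(\Phi(\mathbf{x};\bm{s}),\,\mathbf{h}(\mathbf{x};\bm{s}))$, where $\mathbf{h}$ collects the drift quantities appearing in \eqref{10d}--\eqref{10e}. The achievable region under arbitrary feasible policies is then the set of pairs $\bigl(\mathbb{E}_{\bm{s}}[\Phi(\mathbf{x}(\bm{s});\bm{s})],\,\mathbb{E}_{\bm{s}}[\mathbf{h}(\mathbf{x}(\bm{s});\bm{s})]\bigr)$ as $\mathbf{x}(\cdot)$ ranges over measurable selections into $\mathcal{F}(\cdot)$, and it is closed under convex combinations by independent randomization across slots. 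A Caratheodory-type argument then identifies this set with the set achievable by stationary randomized policies whose action distribution depends only on the current $\bm{s}^t$.

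Applying this equivalence, the optimum $\tilde{\Phi}^*$ of \eqref{eq: objective-relax} is attained by some stationary randomized policy, while the Slater-type policy $\mathcal{X}$ assumed just before the lemma furnishes a second stationary policy with strict constraint slack $\zeta$. Mixing these two policies with a fixed Bernoulli switch applied independently across slots produces a single stationary randomized policy $\mathcal{P}^{\text{stat}}$ meeting \eqref{stationary_policy}: linearity of expectation delivers the cost identity to within arbitrary tolerance, while the expected constraints remain bounded above by a strictly negative constant; relabeling that constant as $\zeta$---which is all that the Lyapunov drift argument in Appendix~B actually needs---completes the construction.

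The hard part will be the measure-theoretic step of asserting that unrestricted feasible policies and stationary randomized policies attain the same region of expected performance vectors. This is a standard but nontrivial fact, so I would handle it by citing the relevant theorem from \cite{neely2010stochastic} once the i.i.d.\ assumption on $\bm{s}^t$ is in place, rather than re-deriving the measurable-selection and convex-hull arguments from scratch; everything else in the proof is bookkeeping on top of this equivalence.
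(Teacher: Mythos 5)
Your proposal is correct and follows essentially the same route as the paper: the paper's entire proof of this lemma is a one-line deferral to \cite[Theorem 4.5]{neely2010stochastic}, which is precisely the characterization of the achievable performance region by stationary randomized (state-only) policies that you invoke for the hard measure-theoretic step. Your additional detail on mixing the cost-optimal stationary policy with the Slater policy, and your honest caveat that this yields the cost identity only up to an arbitrary tolerance while preserving a strictly negative slack, is a faithful reconstruction of what that cited theorem actually delivers and of how the drift argument in Appendix~B consumes it.
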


\begin{proof}
The proof argument is similar to that in \cite[Theorem 4.5]{neely2010stochastic}; hence, it is omitted for brevity.
\end{proof}

It is worth noting that \eqref{stationary_policy} not only assures that the stationary control policy $\mathcal{P}^{\text{stat}}$ achieves the optimal cost for \eqref{eq: objective-relax}, but also guarantees that the resultant expected cost per slot $t$ is equal to the optimal time-averaged cost (due to the stationarity of $\bm{s}^t$ and $\mathcal{P}^{\text{stat}}$).

Now from (27) we have
\begin{align}
    & \mathbb{E}[\triangle \Upsilon(t)] + \frac{1}{\epsilon}\mathbb{E}[\Phi^t(\mathbf{x}^t)] \notag \\
    & \leq B + \frac{1}{\epsilon} \mathbb{E}\Bigl(\Phi^{\text{stat}}(\mathbf{x}^t) + \epsilon \sum_{n,i}[ Q_{kn}^{i}(t) ( \sum_{a \in \mathcal{N}} u_{k,an}^{i,\text{stat}}(t))\notag  \\
    &+\sum_{c \in \mathcal{N}} v_{k,cn}^{i,\text{stat}}(t)
    +R_{kn}^{i,t}-\sum_{b\in \mathcal{N}} u_{k,nb}^{i,\text{stat}}(t)-p_{kn}^{i,\text{stat}}(t))e_{kn}^{\text{stat}}(t))] \notag \\
    &+\epsilon \sum_{i,k,n}[ q_{kn}^{i}(t)(p_{k'n}^{i,\text{stat}}(t)e_{k'n}^{\text{stat}}(t))-\sum_{d\in \mathcal{N}} v_{k,nd}^{i,\text{stat}}(t))] \Bigr) \notag \\
    & \leq B + \frac{1}{\epsilon} \Phi^*- \zeta \sum_{i,k,n}\mathbb{E}[ Q_{kn}^{i}(t)+q_{kn}^{i}(t)],
\end{align}where the first equality holds since Algorithm 1 minimizes the instantaneous Lagrangian $L^t$ in \eqref{eq.instantaneous_Lag} among all policies satisfying \eqref{eq: line rate} and \eqref{eq: service rate}, including $\mathcal{P}^{\text{stat}}$; and the last inequality is due to Lemma~1.

Summing over all $t$, we then have
\begin{align}
    & \sum_{t=0}^{T-1} \mathbb{E}[\triangle \Upsilon(t)] + \frac{1}{\epsilon} \sum_{t=0}^{T-1} \mathbb{E}[\Phi^t(\mathbf{x}^t)] \notag \\
    & = \mathbb{E}[\Upsilon(T)]-\Upsilon(0) + \frac{1}{\epsilon} \sum_{t=0}^{T-1} \mathbb{E}[\Phi^t(\mathbf{x}^t)] \notag \\
    & \leq T(B + \frac{\Phi^*}{\epsilon})- \zeta \sum_{t=0}^{T-1} \sum_{i,k,n}\mathbb{E}[ Q_{kn}^{i}(t)+q_{kn}^{i}(t)] \notag
\end{align}which leads to
\begin{align}
&\frac{1}{T}\sum_{t=0}^{T-1} \sum_{i,k,n}\mathbb{E}[ Q_{kn}^{i}(t)+q_{kn}^{i}(t)]\leq \frac{1}{\zeta}(B + \frac{\Phi^*}{\epsilon})+\frac{\Upsilon(0)}{\zeta T}.
\end{align}(18b) follows by taking $T \rightarrow \infty$.

\subsection{Proof of Theorem 2}
From \eqref{sub-problem3a}, we can get
\begin{align}\label{app1}
&f_1(\hat{\mathbf{e}}^t,\hat{\mathbf{p}}^t|\mathbf{A})\notag\\
&=  [\frac{\alpha_n^t}{\epsilon}(\hat{p}_{kn}^{i}(t))^2-(Q_{kn}^{i}(t)-q_{k''n}^{i}(t))\hat{p}_{kn}^{i}(t)]\hat{e}_{kn}(t) \notag \\
&= [\frac{\alpha_n^t}{\epsilon}(\hat{p}_{kn}^{i}(t))^2-(\hat{Q}_{kn}^{i}(t)-\hat{q}_{k''n}^{i}(t))\hat{p}_{kn}^{i}(t)]\hat{e}_{kn}(t) \notag\\
&+[(\hat{Q}_{kn}^{i}(t)-Q_{kn}^{i}(t))-(\hat{q}_{k''n}^{i}(t)-q_{k''n}^{i}(t))]\hat{p}_{kn}^{i}(t)\hat{e}_{kn}(t).
\end{align}

Recall that $\{\hat{\mathbf{e}}^t,\hat{\mathbf{p}}^t\}$ is the optimal solution under the approximate $\hat{\mathbf{A}}$, we have
\begin{align}\label{app2}
&f_1(\hat{\mathbf{e}}^t,\hat{\mathbf{p}}^t|\hat{\mathbf{A}})\notag\\
&=[\frac{\alpha_n^t}{\epsilon}(\hat{p}_{kn}^{i}(t))^2-(\hat{Q}_{kn}^{i}(t)-\hat{q}_{k''n}^{i}(t))\hat{p}_{kn}^{i}(t)]\hat{e}_{kn}(t) \notag \\
&\leq f_1(\mathbf{e}^t,\mathbf{p}^t|\hat{\mathbf{A}})\notag\\
&=[\frac{\alpha_n^t}{\epsilon}(p_{kn}^{i}(t))^2-(\hat{Q}_{kn}^{i}(t)-\hat{q}_{k''n}^{i}(t))p_{kn}^{i}(t)]e_{kn}(t) \notag \\
&=[\frac{\alpha_n^t}{\epsilon}(p_{kn}^{i}(t))^2-(Q_{kn}^{i}(t)-q_{k''n}^{i}(t))p_{kn}^{i}(t)]e_{kn}(t) \notag\\
&+[(Q_{kn}^{i}(t)-\hat{Q}_{kn}^{i}(t))-(q_{k''n}^{i}(t)-\hat{q}_{k''n}^{i}(t))]p_{kn}^{i}(t)e_{kn}(t). \notag\\
&=f_1(\mathbf{e}^t,\mathbf{p}^t|\mathbf{A})\notag\\
&+[(Q_{kn}^{i}(t)-\hat{Q}_{kn}^{i}(t))-(q_{k''n}^{i}(t)-\hat{q}_{k''n}^{i}(t))]p_{kn}^{i}(t)e_{kn}(t).
\end{align}

Combining \eqref{app1} and \eqref{app2}, we have
\begin{subequations}\label{difference}
\begin{align}\label{difference1}
&|f_1(\hat{\mathbf{e}}^t,\hat{\mathbf{p}}^t|\mathbf{A})-f_1(\mathbf{e}^t,\mathbf{p}^t|\mathbf{A})| \leq
|[(\hat{Q}_{kn}^{i}(t)-Q_{kn}^{i}(t))\notag\\
&-(\hat{q}_{k''n}^{i}(t)-q_{k''n}^{i}(t))][\hat{p}_{kn}^{i}(t)\hat{e}_{kn}(t)-p_{kn}^{i}(t)e_{kn}(t)]| \notag\\
&\leq[|\hat{Q}_{kn}^{i}(t)-Q_{kn}^{i}(t)|+|\hat{q}_{k''n}^{i}(t)-q_{k''n}^{i}(t)|]\cdot \notag\\
&\frac{\epsilon}{2\alpha_n^t}[|\hat{Q}_{kn}^{i}(t)-Q_{kn}^{i}(t)|+|\hat{q}_{k''n}^{i}(t)-q_{k''n}^{i}(t)|]\notag\\
&\leq\frac{\epsilon}{2\alpha_n^t}(T_\Delta\omega_Q+T_\Delta\omega_q)^2.
\end{align}
The second inequality holds due to \eqref{p_kn} and the inequality $|a-b|\leq |a|+|b|$.
Likewise, we can get the optimality loss of \eqref{sub-problem3b} and \eqref{sub-problem3c}, as given by
\begin{align}
&|f_2(\hat{\mathbf{u}}^t)-f_2(\mathbf{u}^t)| \leq \frac{2\epsilon}{\beta_{[n,b]}^t}(T_\Delta\omega_Q)^2, \label{difference2}\\
&|f_3(\hat{\mathbf{v}}^t)-f_3(\mathbf{v}^t)| \leq \frac{\epsilon}{2\beta_{[n,b]}^t}(T_\Delta\omega_Q+T_\Delta\omega_q)^2. \label{difference3}
\end{align}
\end{subequations}
Adding up \eqref{difference}, we prove the theorem.


\end{document}